\def\BState{\State\hskip-\ALG@thistlm}
\newcommand{\1}{\mathbf{1}}
\newcommand{\e}{\mathbf{e}}
\newcommand{\abs}[1]{\left\vert #1 \right\vert}
\newcommand{\ie}{\emph{i.e.}}
\newcommand{\cmnt}[1]{{\color{red} #1}}
\begin{document}

\title{
Herd Behaviors in Epidemics: \\ A Dynamics-Coupled Evolutionary Games Approach

}


\author{Shutian Liu         \and
       Yuhan Zhao \and
       Quanyan Zhu 
}


\institute{S. Liu, Y. Zhao and Q. Zhu \at
Department of Electrical and Computer Engineering, Tandon School of Engineering \\
New York University, Brooklyn, NY, 11201, USA \\
E-mail: \{sl6803, yhzhao, qz494\}@nyu.edu
           \and
}

\date{Received: date / Accepted: date}

\maketitle

\begin{abstract}

The recent COVID-19 pandemic has led to an increasing interest in the modeling and analysis of infectious diseases. 
The pandemic has made a significant impact on the way we behave and interact in our daily life. The past year has witnessed a strong interplay between human
behaviors and epidemic spreading.
In this paper, we propose an evolutionary game-theoretic framework to study the coupled evolutions of herd behaviors and epidemics. 
Our framework extends the classical degree-based mean-field epidemic model over complex networks by coupling it with the evolutionary game dynamics. 
The statistically equivalent individuals in a population choose their social activity intensities based on the fitness or the payoffs that depend on the state of the epidemics. Meanwhile, the  spreading of the  infectious disease over the complex network is reciprocally influenced by the players' social activities.
We analyze the coupled dynamics by studying the stationary properties of the epidemic for a given herd behavior and the structural properties of the game for a given epidemic process. The decisions of the herd turn out to be strategic substitutes.
We formulate an equivalent finite-player game and an equivalent network to represent the interactions among the finite populations.
We develop structure-preserving approximation techniques to study time-dependent properties of the joint evolution of the behavioral and epidemic dynamics.
The resemblance between the simulated coupled dynamics and the real COVID-19 statistics in the numerical experiments indicates the predictive power of our framework.

\keywords{Evolutionary game \and Epidemic dynamics \and Complex networks \and Coupled dynamics \and Structure-preserving approximation}
\end{abstract}

\section{Introduction}
\label{sec:intro}

The COVID-19 pandemic has unprecedentedly impacted our society in many ways. Companies, schools, and the government have shut down their offices. Many people work at home, shop online, and communicate over zoom. The past year has witnessed a litany of policies regarding social distancing, mask-wearing, and vaccination to prevent and mitigate the spreading of the pandemic. The pandemic has made a significant impact on the way we behave and interact in our daily life.
%
We have observed a strong interplay between people's behaviors and the pandemic. When the pandemic transforms the pattern of social interactions, the human behaviors also change how the infectious disease spreads.  
When the number of COVID cases goes down, reopening policies enable  more social activities to return to normal. If not done carefully, they would create second or third waves of infections, which we have witnessed recently in many countries. 

 The behaviors of individuals in the same fashion create a collective behavioral pattern that leads to the behavior of the population, which is also known as herd behavior. The herd behavior plays an important role in the pandemic. It is often driven by policies or individual incentives. For example, cities like New York and London have designed incentives for individuals to be vaccinated to reach targeted herd immunity. Many countries have enforced the policies of mask-wearing in public spaces to create herd behavior that reduces the risk of mass infection. 


Existing works on herd behaviors have focused mainly on topics related to financial markets and economics \cite{banerjee1992simple,scharfstein1990herd}.
At the same time, epidemic processes are often studied as stand-alone dynamical processes without incorporating individual behaviors  into the model \cite{pastor2015epidemic}. 
The epidemic models alone from the literature are insufficient. 
There is a need for an integrated framework that gives a holistic understanding of the pandemic together with herd behaviors.

\begin{figure}
\centering
{\includegraphics[width=0.8\textwidth]{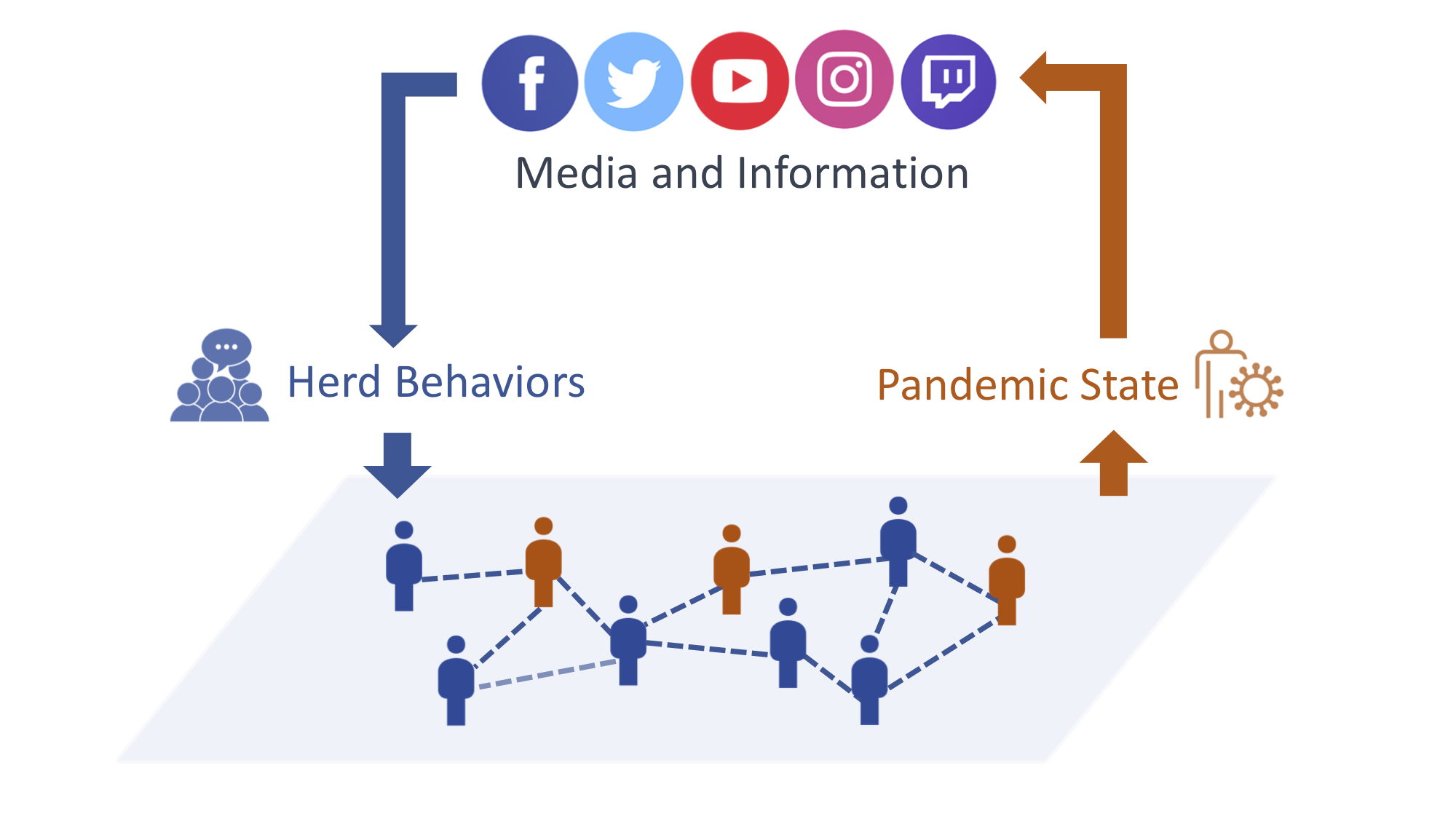}
    \label{fig:subframework2}
}
\caption[Optional caption for list of figures]{Dynamics-coupled evolutionary game framework. The herd behaviors influence epidemic spreading among the population. The media reports about the epidemic states stimulate strategy revisions, which reshape herd behaviors.
} 
\label{fig:framework}
\end{figure}

In this paper, we propose a dynamics-coupled evolutionary game-theoretic framework to model the herd behaviors that are coupled with the spreading of epidemics. 
Noncooperative games \cite{bacsar1998dynamic} are natural tools for the study of strategic decision-making of rational individuals in competitive environments.
When populations instead of a finite number of individuals are of interest, the strategy profile of the population game captures the herd behavior.
The reason lies in that the  macroscopic herd behaviors of the populations result from the microscopic strategic choices of individuals without central coordination. 

Evolutionary games study the strategic behaviors of the populations in which one population can mutate and choose a strategy against another population to maximize its fitness. The evolutionary game dynamics are population-level or mean-field dynamics that describe the evolution or the adaptive revision of the strategies when the populations interact with each other. 
The outcome of the evolutionary game and its associated dynamics is defined by the concept of evolutionarily stable strategy (ESS), which refines the concept of Nash equilibrium. 
The evolutionary dynamics provide a straightforward way to describe the macroscopic strategic interactions among the populations and the evolution of the herd behavior in response to the underlying changing environments, including the information received by the population and the fitness of the population affected by the epidemics.

One critical component of our evolutionary game framework is the modeling of the infectious disease. 
In this work, we consider a class of mean-field epidemic models over complex networks to capture the social interactions among the individuals \cite{pastor2015epidemic,pastor2001epidemic,dorogovtsev2008critical,newman2018networks}. 
The individuals over the network are assumed to be statistical equivalent within the same class of population. 
The mean-field dynamics forthrightly describe the influences of the herd behavior on the spreading of the epidemic.
We use a complex network model that is characterized by a degree distribution to represent the social interactions of the populations. 
Each individual in the network is associated with a degree or the number of connections that determines the probability of infection and thus the spreading of the disease. 


The epidemic model is consolidated into the evolutionary game framework as illustrated in Fig. \ref{fig:framework}. 
The spreading of the epidemic among the populations is affected by the social activity intensities of the individuals.
As the information and the policies concerning the epidemic being communicated to the population through public media, individuals can respond to them and adapt their social activities, constituting herd behavior at the population level. It is clear from Fig. \ref{fig:framework} that the state of the epidemics and the herd behaviors are interdependent. 


The integrated framework in Fig. \ref{fig:framework} can be mathematically described by a system of coupled differential equations. One set of differential equations represents the mean-field evolutionary dynamics of the game strategies. The other set of differential equations describes the epidemic process. It is critical to exam the structural properties of the coupled dynamics, including the stability and the steady state. 
To this end, we first discuss the stability of the epidemic dynamics under fixed herd behavior and then analyze the  structural properties of the evolutionary game under the steady states. 
We find out that, under certain conditions, there is a unique nontrivial globally asymptotically stable steady state given the herd behaviors. The players' decisions in the game turn out to be strategic substitutes. This property makes the ESS or the Nash equilibria achievable even when the individuals revise their strategies myopically on their own. This structural property is shown to hold even when the epidemic is not at its steady state.


We formulate a unified optimization problem to compute the Nash equilibrium based on an equivalent representation of the population game as a finite-player game problem, where each population is viewed as a player. Furthermore, we develop two structure-preserving approximation techniques to analyze the time-dependent evolution of the herd behaviors and the epidemics. We show that both schemes preserve the strategic substitutes property of the game.

The proposed dynamics-coupled evolutionary game provides a suitable framework to study the impact of misinformation on epidemics. We use numerical experiments to compare the simulated infection curve with the COVID-19 statistics of the infected in New York City. The prediction of two peaks in the pandemic over a time interval of interest provides a promising analytical and policy design tool for the pandemic.

The structure of this paper is as follows. Section 2 discusses related works. We introduce the general framework  in Section \ref{sec:problem setting}. 
In Section \ref{sec:long term}, we present analytical results for the case where the epidemic evolves at a faster time scale. We characterize the steady-state behavior of the epidemic, investigate the Nash equilibrium and structural properties of the game, and study the impact of misinformation. 
In Section \ref{sec:time dependent}, we extend the structural properties to different time scales and develop approximation schemes to study time-dependent behaviors.
Section \ref{sec:numerical experiments}
presents the numerical experiments. We conclude the paper in Section \ref{sec:conclusion}.

\section{Related Work}
\label{sec:related work}

The mean-field approach has been a standard tool to study the spreading of epidemics over complex networks \cite{pastor2015epidemic,pastor2001epidemic,dorogovtsev2008critical}.
The key components in this approach are the infection probabilities of the nodes, which bridge the degree distribution of the nodes with the contagion events.
This model has been used to model spreading over networks for a diverse range of applications. 
For example, in \cite{gubar2017optimal}, the authors have investigated multi-strain epidemic dynamics over complex networks to study the control policies when a single pathogen creates many strains of infections of different features. 
Another recent endeavor is \cite{chen2020optimal}, where the authors have focused on the optimal quarantining policies when multiple diseases coexist and have observed a switching phenomenon between equilibria.
Our framework extends the statistical equivalence assumptions of the standard degree-based mean-field approach.
The players in our framework are distinguished by  their degrees of connections and the strategies they choose.
The proposed coupled system of differential equations describes the flow of contagion when individuals adopt different social activity intensities. 

The connection of game theory and epidemic models has been successfully established in the celebrated work \cite{bauch2004vaccination}.
The authors have characterized the vaccination decisions in populations using the concept of convergently-stable Nash equilibrium. 
The vaccination game \cite{bauch2004vaccination} has then been used to guide policies on social distancing \cite{gosak2021endogenous} and analyze imitative behaviors in vaccination \cite{fu2011imitation}.
We are motivated by the problem settings of \cite{bauch2004vaccination} and consider a heterogeneous population over complex networks, whose connectivity is characterized by the degree distributions of individuals.
We use specific dynamic processes to model the evolution of the herd behaviors, enabling observations on the time-dependent properties of herd behaviors.

\section{Problem setting}
\label{sec:problem setting}
In this section, we describe our dynamics-coupled evolutionary game framework in detail. We first introduce the general framework and then turn to the setting under epidemic.

\subsection{The general framework}
\label{sec:problem setting:general}
\par
Consider $N\in\mathbb{N}_+$ players (nodes) over a network. Each player belongs to a subset in the set $\mathcal{D}:=\{1,2,...,D\}$ representing its degree of connectivity, \ie, a player in $d\in\mathcal{D}$ has degree $d$. 
Let $N^d\in\mathbb{N}_+$ denote the number of players who has degree $d$. 
We have $N=N^1+N^2+\cdots+N^D$. 
The degree distribution is then denoted by $[m^d]_{d\in\mathcal{D}}:=(m^1,m^2,...,m^D)\in[0,1]^D$, where $m^d=\frac{N^d}{N},\forall d\in\mathcal{D}$.
Let $\Xi$ be the finite state space of all the players and $\mathcal{S}^d:=\{s^d_1,s^d_2,...,s^d_{n^d} \} \subset [0,1]^{n^d}$ be the finite strategy space of players with degree $d$, with $|\Xi|=L$ and $|\mathcal{S}^d|=n^d$, respectively. 
Let $\mathcal{I}^d:=\{1,2,...,n^d\}$ denote the index set of the strategies in $\mathcal{S}^d, \forall d\in\mathcal{D}$. 
The strategy indexed by $i\in\mathcal{I}^d$ is $s^d_i\in\mathcal{S}^d$.
We assume throughout this paper that the strategies in each set $\mathcal{S}^d, \forall d\in\mathcal{D},$ are listed in an increasing order, \ie, for any $i,j\in\mathcal{I}^d$ such that $i>j$, we have that $s^d_i>s^d_j$ . 
Let $\mathcal{S}:=\prod_{d\in\mathcal{D}}\mathcal{S}^d$ with $|\mathcal{S}|=\sum_{d\in\mathcal{D}}n^d=n$.
Let $w^d_\xi(t)$ denote the fraction of players with degree $d$ who are in state $\xi$ at time $t$. Let $x^d_i(t)$ denote the fraction of players with degree $d$ playing strategy $s^d_i$ at time $t$. 
We use $w(t)=(w^1(t),...,w^D(t))\in \mathcal{W} \subset \mathbb{R}^{DL}_+$ and $x(t)=(x^1(t),...,x^D(t))\in \mathcal{X}=\prod_{d\in\mathcal{D}}\mathcal{X}^d\subset \mathbb{R}^{n^1 \times\cdots \times n^D}_+$ to denote the concatenations of $w^d_\xi(t)$ and $x^d_i(t)$, respectively. 
Note that we refer to $x(t)$ as the herd behavior at time $t$.
\par
Suppose that players are constantly interacting physically over the network. 
By physical interactions, we refer to the face-to-face social interactions which can cause potential changes in players' states. 
Interacting in online chat rooms is an example of social interaction which is not a physical interaction, since chatting online will not cause contagion.
We use the degree-based mean-field approach \cite{dorogovtsev2008critical,pastor2015epidemic} to capture the coupled dynamical systems on the large network. The players are assumed to be statistically equivalent if they have the same degree and the same strategy. In other words, in the large population, the players are distinguishable only based on their degree and their strategy.

Let $Q^d_\xi:\mathcal{W}\times\mathcal{X}\times[0,1]^D\rightarrow \mathbb{R}$ be a Lipschitz function describing the dynamical evolution of the fraction of players with degree $d$ who are in state $\xi$. The coupled dynamics of players' state transitions are as follows:
\begin{equation}
    \dot{w}^d_\xi (t)=Q^d_\xi \left( w(t), x(t), [m^d]_{d\in\mathcal{D}} \right), \quad  \forall d\in\mathcal{D}, \ \forall \xi\in\Xi.
    \label{eq:state dynamics}
\end{equation}

Note that in (\ref{eq:state dynamics}), the dependence on $w(t)$ emphasizes the coupling of players' state transitions, and the dependence on $[m^d]_{d\in\mathcal{D}}$ illustrates the effect of the network. We use $\Bar{w}(x)$ to denote the steady-state value of $w(x)$. The game mechanism shown in Fig. \ref{fig:subframework2} coordinates the information acquisition of the players.
As the state evolves, public media broadcasts information relevant to the states $w$ and the strategy profile $x$ to all the players at times.
We assume that the times between information broadcasts are independent, and they follow a rate $\tau$ exponential distribution. 
Each information broadcast triggers a strategic interaction where the players update their strategies based on the current information received.
For all $i,j\in\mathcal{I}^d$, let $R^d_{ij}(w(t),x(t)):\mathcal{W}\times \mathcal{X}\rightarrow [0,1]$ denote the probability of a player with degree $d$ switching from strategy $s^d_i\in\mathcal{S}^d$ to strategy $s^d_j\in\mathcal{S}^d$. 
The evolution of the fraction of players with degree $d$ playing strategy $s^d_i$ in the strategic interactions can be described as follows.
Consider a small time period $dt$. 
There will be $\tau dt$ expected information broadcasts during this period. The change of the number of players with degree $d$ playing strategy $s^d_i$ can be expressed as:
\begin{equation}
    N\left(\sum_{j\in\mathcal{I}^d}x^d_j(t)R^d_{ji}( w(t),x(t) ) -
  \sum_{j\in\mathcal{I}^d}x^d_i(t)R^d_{ij}( w(t),x(t) ) \right)\tau dt.
  \label{eq:number change of x^d_i(t)}
\end{equation}
By considering fractions of players in (\ref{eq:number change of x^d_i(t)}), we obtain the mean dynamic as follows:
\begin{equation}
\begin{aligned}
    \frac{1}{\tau}\dot{x}^d_i(t)&=\sum_{j\in\mathcal{I}^d}x^d_j(t)R^d_{ji}( w(t),x(t) )  \\
    &-
  \sum_{j\in\mathcal{I}^d}x^d_i(t)R^d_{ij}( w(t),x(t) ), \forall i\in\mathcal{I}^d, \forall d\in\mathcal{D}.
  \label{eq:strategy dynamics}
\end{aligned}
\end{equation}
Equations (\ref{eq:state dynamics}) and (\ref{eq:strategy dynamics}) constitute a system of coupled differential equations describing the joint evolution of states and strategies. 
This coupled system is depicted in Fig. \ref{fig:coupled dynamics}.

\begin{figure}
\centering
{\includegraphics[width=0.8\textwidth]{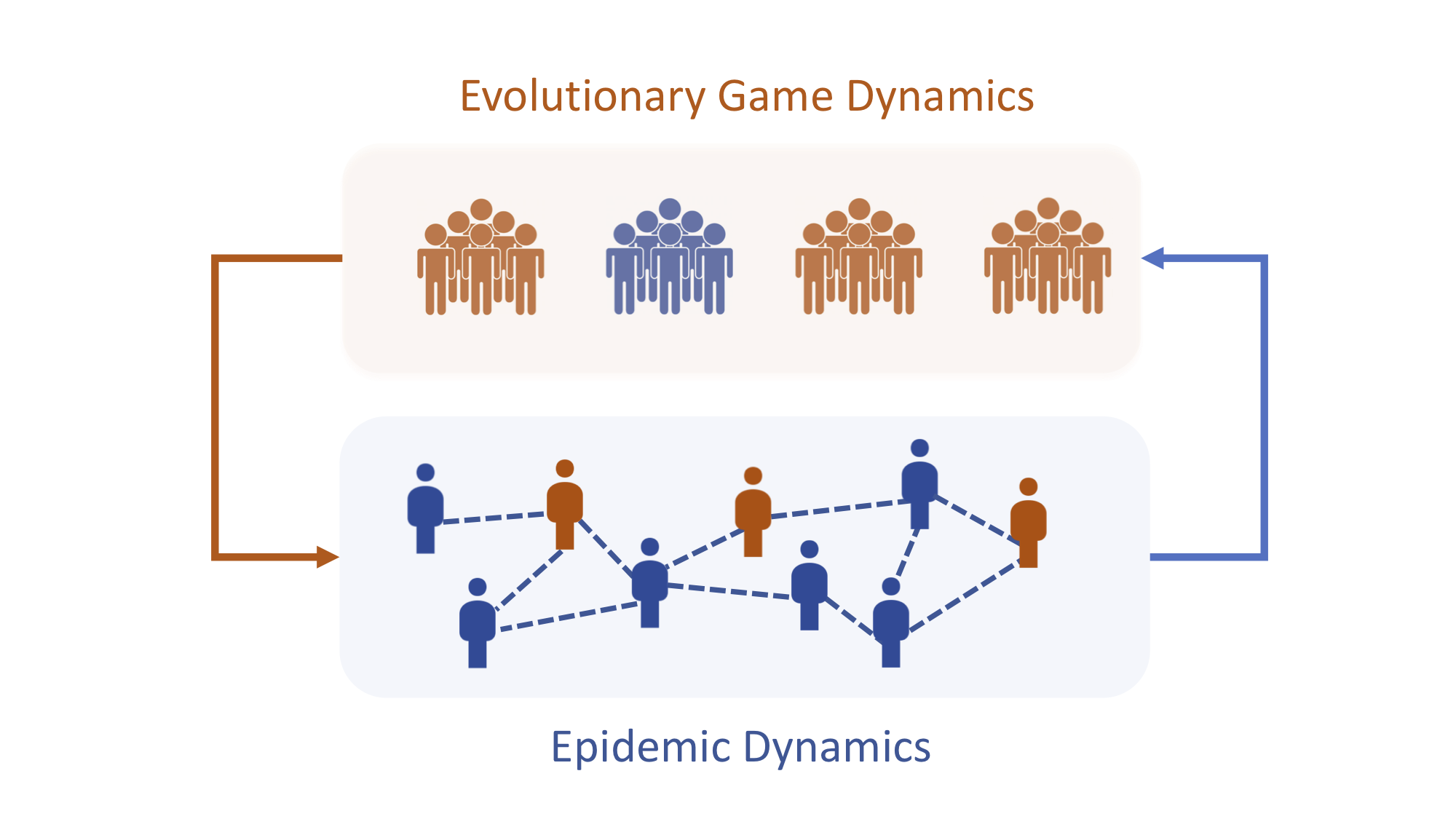}
}

\caption[Optional caption for list of figures]{Coupled evolution of game and epidemic dynamics.
} 
\label{fig:coupled dynamics}
\end{figure}

\par
Since the coupled evolutions in (\ref{eq:state dynamics}) and (\ref{eq:strategy dynamics}) involve the subpopulations of the players instead of individual players, we naturally interpret the above setting as the interactions among populations when the total number of players goes to infinity, \ie, $N\rightarrow \infty$. 
In the sequel, we refer to the players with degree $d\in\mathcal{D}$ as population $d$.
\par
For all $i\in\mathcal{I}^d$, let $F^d_i:\mathcal{W}\times \mathcal{X}\rightarrow \mathbb{R}$ denote the payoff function of a player with degree $d$ who plays strategy $s^d_i\in\mathcal{S}^d$. In general, $F^d_i$ depends on $w(t)$ and $x(t)$. The dependence on $x(t)$ characterizes the game-theoretic aspect of the framework. The dependence on $w(t)$ reveals the coupling of all players' state evolutions. 
We use $F=(F^1,...,F^D)\in\mathbb{R}^{n^1\times \cdots \times n^D}$ to denote the concatenation.
\par
Connecting with the standard definition of evolutionary games \cite{sandholm2010population}, we refer to the a herd behavior $x$ as a social state and call, with a slight abuse of terminology, $R:=(R^1,...,R^D)^T\in[0,1]^{n^1(n^1-1)\times \cdots \times  n^D(n^D-1) }$ a revision protocol, where $R^d:=(R^d_{12},...,R^d_{1n^d},...,R^d_{n^d1},...,R^d_{n^d(n^d-1)})^T\in[0,1]^{n^d (n^d-1)}, \forall d\in\mathcal{D}$.
We refer to the game defined by the payoff function $F$, the evolutionary dynamics (\ref{eq:strategy dynamics}), and the coupled state dynamics (\ref{eq:state dynamics}) as a dynamics-coupled evolutionary game.


\begin{definition}
\label{def:Nash equilibrium}
Let $NE(F)$ denote the set of NE of the game defined by the payoff $F$ and the coupled state transitions (\ref{eq:state dynamics}). A social state $x\in\mathcal{X}$ is a Nash Equilibrium (NE), i.e., $x\in NE(F)$, if for all $d\in\mathcal{D}$, $x^d\in m^d BR^d(x)$, where the set $BR^d(\cdot)$ denotes the set of best responses, i.e., $BR^d(x):=\{z\in\mathbb{R}^{n^d}_+: \1^T z=1, z_i>0, \text{ if } s^d_i\in\arg\max_{j\in\mathcal{I}^d}F^d_j(x,\Bar{w}(x))\}$.
\end{definition}
The difference between Definition \ref{def:Nash equilibrium} and the standard NE is that $x$ is the best response to the payoffs generated by $x$ together with the steady-state value of (\ref{eq:state dynamics}) given $x$. This integrates the coupling of (\ref{eq:state dynamics}) and (\ref{eq:strategy dynamics}) into the definition of NE of our framework.

\par
The terms social state and herd behavior will be used interchangeably when we refer to $x$.

\subsection{The framework under epidemics}
\label{sec:problem setting: population game}
\par
Consider the states of the players described by the Susceptible-Infected (SI) compartmental model over a degree-based network \cite{pastor2001epidemic} with degree distribution $[m^d]_{d\in\mathcal{D}}$. 
We use $I^d_i(t)$ to denote the fraction of the infected players in population $d$ adopting strategy $s^d_i\in\mathcal{S}^d$ at time $t$, $\forall i\in\mathcal{I}^d, \forall d\in\mathcal{D}$. 


The strategies $s^d_1,s^d_2,...,s^d_{n^d} \in [0,1],  d\in\mathcal{D},$ are in the action sets of an individual with degree $d$. A strategy $s^d_i$ can be interpreted as the choice of social inactivity intensity level $i$ of an individual of degree $d$. The social inactivity intensity is quantized into $n^d$ levels for an individual to choose from.
Naturally, the chosen social activity intensity (SAI) level is given by $1-s^d_i, \forall i\in\mathcal{I}^d, \forall d\in\mathcal{D}$, which can be viewed as the probability of a player behaving actively in the face-to-face physical interactions through all of her connections with other players. 
A strategy $s^d_i, i\in\mathcal{I}^d, d\in\mathcal{D},$ close to $1$ means that the player is considerably mindful when interacting socially through her connections with others.
Apart from social activity intensities, the strategies of players can also be interpreted as 
the willingness to wear masks or the probability to vaccinate.
Note that in the context of epidemics, the herd behavior describes the collective patterns in the populations' social activity intensities. 


Given a herd behavior $x\in\mathcal{X}$, the evolution of $I^d_i(t)$ is also affected by a recovery rate $\gamma\in\mathbb{R}_+$ and a contagion rate $\lambda\in\mathbb{R}_+$. 
The dynamical system, analogous to (\ref{eq:state dynamics}), describing the time evolution of $I^d_i(t)$ is
\begin{equation}
    \dot{I}^d_i(t) = -\gamma I^d_i(t) + \lambda^d_i \left( 1-I^d_i(t)\right) d \Theta(t),
    \label{eq:SI dynamics}
\end{equation}
where $\lambda^d_i=\lambda (1-s^d_i)\in\mathbb{R}_+$ denotes the activity-aware contagion rate of a player with degree $d$ and SAI $1-s^d_i$. The second term on the right-hand side of (\ref{eq:SI dynamics}) corresponds to the growth of $I^d_i(t)$.
This growth is proportional to the activity-aware contagion rate $\lambda^d_i$, the density of susceptible players $1-I^d_i(t)$, the degree of connections $d$, and the probability $\Theta(t)\in[0,1]$ that a link is connected to an infected player. This probability can be expressed as follows:
\begin{equation}
    \Theta(t)=\frac{\sum_{d\in\mathcal{D}} \left(\sum_{i\in\mathcal{I}^d}dx^d_iI^d_i(t) \right)}{\sum_{d\in\mathcal{D}}dm^d}.
    \label{eq:Theta}
\end{equation}
Since we have assumed the statistical equivalence of players with the same degree and the same strategy, 
the numerator of the right-hand-side of (\ref{eq:Theta}) consists of the sum of the probabilities that a link is connected to an infected player within each equivalence class.
The probability of a link connecting to an infected player with degree $d$ choosing strategy $s^d_i$ is proportional to $d x_{i}^d I_{i}^d(t)$. Hence, we obtain (\ref{eq:Theta}).
The consistency of (\ref{eq:SI dynamics}) and (\ref{eq:Theta}) follows from a similar reasoning as discussed in \cite{pastor2001epidemic}, since the effect of SAI has already been considered in the activity-aware contagion rate $\lambda^d_i$.

Note that (\ref{eq:Theta}) couples the dynamics in (\ref{eq:SI dynamics}) corresponding to each strategy of each population.
The concatenation of all $n$ fractions of infected is $I(t)=(I^1(t),...,I^D(t))\in[0,1]^{n^1\times \cdots \times n^D}$. 
We use $\Bar{I}^d_i$ and $\Bar{\Theta}$ to denote the steady-state quantities of $I^d_i(t)$ and $\Theta(t)$. The concatenations are $\Bar{I}$ and $\Bar{I}^d$.

Players with degree $d$ who choose strategy $s^d_i\in\mathcal{S}^d, i\in\mathcal{I}^d, d\in\mathcal{D}$ has payoff $F^d_i$, which depends on the information broadcast at the time of sampling, \ie, $I^d_i(t)$, as defined in Section \ref{sec:problem setting:general}. 
In the context of the epidemic, it takes the following form:
\begin{equation}
    F^d_i=s^d_i\mathcal{U}_{\mathrm{ina}}+(1-s^d_i)\mathcal{U}_{\mathrm{act}}^{d,i},
    \label{eq:payoff general}
\end{equation}
where $\mathcal{U}_{\mathrm{act}}^{d,i}:[0,1]\rightarrow \mathbb{R}_+$ and $\mathcal{U}_{\mathrm{ina}}\in\mathbb{R}$.
In (\ref{eq:payoff general}), $s^d_i\mathcal{U}_{\mathrm{ina}}$ represents the expected utility of being socially inactive; $(1-s^d_i)\mathcal{U}_{\mathrm{act}}^{d,i}$ represents the expected utility of being socially active.


\par
The function $\mathcal{U}_{\mathrm{act}}^{d,i}, \forall i\in\mathcal{I}^d, \forall d\in\mathcal{D}$ corresponds to the reward from getting infected through physical interactions on the network. 
Therefore, we let $\mathcal{U}_{\mathrm{act}}^{d,i}$ be decreasing in $\eta^d_i\in[0,1]$, which represents the probability that a player in population $d$ playing strategy $s^d_i$ is infected.
The probability that a player is infected can be equivalently understood as the fraction of players within the same statistical equivalent class who are infected. 
Thus, we obtain $\eta^d_i=\mathcal{O}^d_i(I^d_i(t))$, where $\mathcal{O}^d_i:[0,1] \rightarrow [0,1]$ is a player's observation of the infected fraction of players at the time of an information broadcast.
Note that the case of imperfect observations will be discussed in Section \ref{sec:long term:misinformation}. 
For now, we consider the case of perfect observations, \ie, $\eta^d_i=I^d_i(t)$. 
Since the evolution (\ref{eq:SI dynamics}) is coupled and the term (\ref{eq:Theta}) depends on the herd behavior $x$, the payoff satisfies the definition in Section \ref{sec:problem setting:general}.
Note that the rate parameter $\tau$ determines the time-scale of the coupled system of differential equations.
The sampled epidemic status is at a steady state if $\frac{1}{\tau} \rightarrow \infty$ and is time-dependent otherwise.

In this paper, we consider $\mathcal{U}_{\mathrm{act}}^{d,i}=-r_{\mathrm{act}}\eta^d_i$ for all players with reward parameter $r_{\mathrm{act}}\in\mathbb{R}_+$, for simplicity reasons.
The term $\mathcal{U}_{\mathrm{ina}}$ corresponds to isolating oneself from others. Hence, we assume that $\mathcal{U}_{\mathrm{ina}}$ is a negative constant reward for all players, \ie, $\mathcal{U}_{\mathrm{ina}}=r_{\mathrm{ina}}\in\mathbb{R}_-$. 
\par
By defining $r=\frac{ r_{\mathrm{ina}} }{r_{\mathrm{act}}} \in \mathbb{R}_-$ to be the relative reward of being socially inactive against being socially active, we obtain the payoff function suitable under (\ref{eq:SI dynamics}) as follows:
\begin{equation}
    F^d_i=s^d_ir-(1-s^d_i)\eta^d_i.
    \label{eq:payoff}
\end{equation}
Note that we have dropped the dependence of $F^d_i$ and $\eta^d_i$ on the epidemic state $I$ when we analyze equilibrium behaviors, since the epidemic state is a function that only depends on the social state $x$, as can be observed in Definition \ref{def:Nash equilibrium}. 
\par

\section{Long-term behavior}
\label{sec:long term}
In this section, we study the long-term behavior of our model under the assumption that $\frac{1}{\tau} \rightarrow \infty$. 
 The epidemic dynamics (\ref{eq:SI dynamics}) are assumed to reach the steady state more quickly than the herd behaviors. When an information broadcast changes the behavior, the epidemic dynamics would respond to it quickly and reach a steady-state before the next information broadcast arrives.

\subsection{Steady states of epidemic dynamics given social states}
\label{sec:long term:steady state analysis}
From (\ref{eq:SI dynamics}) and (\ref{eq:Theta}), we obtain the steady state as
\begin{equation}
    \Bar{I}^d_i=\frac{\theta^d_i\Bar{\Theta}}{\gamma+\theta^d_i\Bar{\Theta}},
    \label{eq:steady I of Theta}
\end{equation}
where $\theta^d_i=\lambda d(1-s^d_i)$, and
\begin{equation}
    \Bar{\Theta}=\frac{\sum_{d\in\mathcal{D}}\left( d\sum_{i\in\mathcal{I}^d}x^d_i\Bar{I}^d_i \right)}{\sum_{d\in\mathcal{D}}dm^d}.
    \label{eq:steady Theta of I}
\end{equation}
Let $\Bar{d}:=\sum_{d\in\mathcal{D}}dm^d$ denote the average degree of the network. By combining (\ref{eq:steady I of Theta}) and (\ref{eq:steady Theta of I}), we obtain the equation containing only $\Bar{\Theta}$ as follows:
\begin{equation}
    \Bar{\Theta}=\Bar{d}^{-1}\sum_{d\in\mathcal{D}} \left( d\sum_{i\in\mathcal{I}^d}\frac{x^d_i\theta^d_i\Bar{\Theta}}{\gamma+\theta^d_i\Bar{\Theta}} \right).
    \label{eq: steady Theta only}
\end{equation}
From (\ref{eq: steady Theta only}), we observe that $\Bar{\Theta}_0=0$ is always a solution. 
Accordingly, $\Bar{I}^d_{i,0}=0$ for all $d$ and for all $i$.  
We refer to $(\Bar{I}_{0},\Bar{\Theta}_0)$ as the zero steady-state pair.
At this steady state, players are uninfected no matter which statistical equivalent class they belong to; and no link leads to an infectious player.
The zero steady-state pair is often referred to as the disease-free state \cite{chen2020optimal}.
Meanwhile, there exist positive steady states, which arise from dividing $\Bar{\Theta}$ from both sides of (\ref{eq: steady Theta only}) when $\Bar{\Theta}\neq 0$:
\begin{equation}
    1=\Bar{d}^{-1}\sum_{d\in\mathcal{D}} \left( d\sum_{i\in\mathcal{I}^d}\frac{x^d_i\theta^d_i}{\gamma+\theta^d_i\Bar{\Theta}} \right).
    \label{eq:steady steate Theta}
\end{equation} 
In a positive steady-state pair $(\Bar{I}_+(x),\Bar{\Theta}_+(x))$, we have $\Bar{\Theta}_+(x)\in(0,1]$. 
It shows that a link possesses a positive probability to connect to an infected node. 
In addition, $\Bar{I}^d_{i,+}(x)=0$ if and only if $s^d_i=1$. 
It explains that a player can be safe from the epidemic only if she lives a totally isolated life. 
Note that the positive steady state pair depends on the social state $x$, since (\ref{eq: steady Theta only}) contains $x$.
The next result presents the conditions on the stability of the zero steady-state pair and the positive steady-state pair.

\begin{theorem}
\label{thm:epidemic GAS}
Consider the dynamical system in (\ref{eq:SI dynamics}) and (\ref{eq:Theta}). 
Given a social state $x\in\mathcal{X}$, the zero steady-state pair $(\Bar{I}_{0},\Bar{\Theta}_0)=(0,0)\in [0,1]^n \times [0,1]$ is globally asymptotically stable if $\frac{\lambda d (1-s^d_i)}{\gamma}<1$ for all $i\in\mathcal{I}^d$ and all $d\in\mathcal{D}$; the unique positive steady-state pair $(\Bar{I}_+(x),\Bar{\Theta}_+(x))\in [0,1]^n \times [0,1]$ is globally asymptotically stable if $\frac{\lambda d (1-s^d_i)}{\gamma}\geq1$ for all $i\in\mathcal{I}^d$ and all $d\in\mathcal{D}$.
\end{theorem}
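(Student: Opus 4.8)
The plan is to treat the epidemic system as a cooperative (monotone) dynamical system on the cube $[0,1]^n$ that is driven through the single scalar field $\Theta$, and to read the threshold dichotomy off the structure of the coupling. First I would rewrite (\ref{eq:SI dynamics}) in the compact form $\dot I^d_i = -\gamma I^d_i + \theta^d_i(1-I^d_i)\Theta(I)$, where $\theta^d_i=\lambda d(1-s^d_i)$ and $\Theta(I)=\Bar d^{-1}\sum_{d,i} d x^d_i I^d_i$ is linear in $I$. Forward invariance of $[0,1]^n$ is immediate: on the face $I^d_i=0$ one has $\dot I^d_i=\theta^d_i\Theta\geq 0$, and at $I^d_i=1$ one has $\dot I^d_i=-\gamma<0$, so solutions exist for all time and remain in the cube.

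Next I would establish the two structural facts that drive everything. First, the off-diagonal Jacobian entries $\partial\dot I^d_i/\partial I^{d'}_j=\theta^d_i(1-I^d_i)\Bar d^{-1}d' x^{d'}_j$ are nonnegative on the cube, so the system is cooperative and its flow is order-preserving. Second, because $\Theta$ is linear the field is sublinear: for $\alpha\in(0,1)$ one checks $f^d_i(\alpha I)-\alpha f^d_i(I)=(1-\alpha)\alpha\theta^d_i I^d_i\Theta(I)\geq 0$, i.e. $f(\alpha I)\geq\alpha f(I)$. The linearization at the origin is $-\gamma\,\mathrm{Id}+ab^\top$ with $a_{(d,i)}=\theta^d_i$ and $b_{(d,i)}=\Bar d^{-1}d x^d_i$; this rank-one perturbation has the single nonzero eigenvalue $b^\top a=\Bar d^{-1}\sum_{d,i}dx^d_i\theta^d_i$, so the origin is linearly stable exactly when $\Bar d^{-1}\sum_{d,i}dx^d_i\theta^d_i<\gamma$. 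Both hypotheses of the theorem are read through this quantity: $\theta^d_i<\gamma$ for all $(d,i)$ forces it below $\gamma$ (using $\sum_{d,i}dx^d_i=\Bar d$), whereas $\theta^d_i\geq\gamma$ for all $(d,i)$ forces it at or above $\gamma$.

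For the subcritical claim I would avoid the monotone machinery and argue directly on $\Theta$: differentiating, $\dot\Theta=-\gamma\Theta+\Theta\,\Bar d^{-1}\sum_{d,i}dx^d_i\theta^d_i(1-I^d_i)\leq\Theta\left(\Bar d^{-1}\sum_{d,i}dx^d_i\theta^d_i-\gamma\right)<0$ whenever $\Theta>0$, so $\Theta(t)\to 0$ exponentially; feeding this back into each $\dot I^d_i\leq-\gamma I^d_i+\theta^d_i\Theta(t)$ and applying a Gr\"onwall comparison drives every $I^d_i\to 0$, giving global asymptotic stability of the disease-free pair. For the supercritical claim I would first obtain existence and uniqueness of $\Bar\Theta_+$: the map $h(\Theta)=\Bar d^{-1}\sum_{d,i}dx^d_i\theta^d_i/(\gamma+\theta^d_i\Theta)$ is strictly decreasing with $h(0)\geq 1$ and $h(1)<1$, so (\ref{eq:steady steate Theta}) has a unique root in $(0,1]$, which fixes $\Bar I_+$ through (\ref{eq:steady I of Theta}). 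Global convergence then follows from the cooperative--sublinear structure by a sub/super-solution sandwich: $\alpha\Bar I_+$ with small $\alpha<1$ is a sub-equilibrium ($f(\alpha\Bar I_+)\geq\alpha f(\Bar I_+)=0$) from which the monotone flow increases to $\Bar I_+$, while the constant state $\mathbf{1}$ is a super-equilibrium from which it decreases to $\Bar I_+$; monotonicity squeezes an arbitrary interior trajectory between these two, forcing convergence to the unique positive equilibrium.

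I expect the supercritical global-stability step to be the main obstacle, for two reasons. The sandwich argument needs the flow to be strongly order-preserving and the positive equilibrium to be the only interior fixed point, and it needs care at the boundary: components with $s^d_i=1$ have $\theta^d_i=0$ and decouple (their $I^d_i\to 0$), while unoccupied classes with $x^d_i=0$ drop out of $\Theta$, so irreducibility and hence strong monotonicity hold only after restricting to the active components with $\theta^d_i>0$ and $x^d_i>0$. Carrying out this reduction cleanly — showing the inactive coordinates relax to the values dictated by the limiting $\Bar\Theta_+$ while the active subsystem converges by the monotone--sublinear dichotomy — is the delicate part of the argument.
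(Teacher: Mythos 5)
Your proposal is sound, and for the part of the theorem that actually carries the weight it takes a genuinely different route from the paper. For the subcritical claim the two arguments essentially coincide: the paper's linear Lyapunov function $V=\sum_{d,i}(dx^d_i/\gamma)\,I^d_i$ is nothing but $(\bar d/\gamma)\,\Theta$, so your ``differentiate $\Theta$ and exhibit a strict decay rate'' is the same computation (and indeed both versions really only use the averaged condition $\bar d^{-1}\sum_{d,i}dx^d_i\theta^d_i<\gamma$, which your rank-one linearization $-\gamma\,\mathrm{Id}+ab^{\top}$ makes explicit as the sharp threshold); the existence and uniqueness of $\bar\Theta_+$ via a strictly decreasing scalar map is also identical in both. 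The divergence is the supercritical global stability: the paper constructs an explicit Lyapunov function $V=\frac{1}{2}\sum_{d,i}b^d_i\,(U^d_i-\bar U^d_i)^2+\Theta-\bar\Theta-\bar\Theta\ln(\Theta/\bar\Theta)$ with $b^d_i=dx^d_i/(\bar d\,\bar U^d_i)$ and verifies $\dot V\le 0$ by brute-force computation, whereas you invoke the cooperative structure, strict sublinearity $f(\alpha I)\ge \alpha f(I)$, and a sub/super-equilibrium sandwich $\alpha\bar I_+\le I_0\le \mathbf{1}$. Each buys something: the paper's certificate is self-contained and needs no monotone-systems machinery, but it rests on a nonobvious function, tacitly needs a LaSalle-type invariance step to pass from $\dot V\le 0$ to convergence (the paper does not supply it), and its $\ln\Theta$ term silently restricts to $\Theta(0)>0$; your route is more conceptual, gives monotone convergence, and generalizes to other sublinear couplings, but it pays with exactly the bookkeeping you flag—classes with $x^d_i=0$ drop out of $\Theta$, and initial data supported only on such classes yield $\Theta\equiv 0$ and convergence to zero, so ``global'' must be read as $\Theta(0)>0$, a caveat that afflicts the paper's proof equally. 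Two small points: under the supercritical hypothesis $\theta^d_i\ge\gamma>0$ your worry about classes with $s^d_i=1$ is vacuous, which simplifies your reduction to the active subsystem; and both your argument and the paper's share the borderline gap when $\theta^d_i\equiv\gamma$ on all occupied classes, where $\Psi(0)=\bar d$ (equivalently $h(0)=1$) and the positive root degenerates to $\bar\Theta_+=0$, so the endemic state does not actually exist in $(0,1]$ and the statement needs a strict inequality somewhere.
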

\begin{proof}
Suppose that $\frac{\lambda d (1-s^d_i)}{\gamma}<1$ for all $i\in\mathcal{I}^d$ and for all $d\in\mathcal{D}$. Since $1-I^d_i(t)\leq1$, we obtain from (\ref{eq:SI dynamics}) that: $\frac{d}{dt}I^d_i(t)\leq-\gamma I^d_i(t)+\lambda (1-s^d_i)d\Theta(t).$
Then, it suffices to discuss the stability of the system
\begin{equation*}
    \dot{I}=\begin{pmatrix}
    -\gamma I^1_1(t) \\
    \vdots \\
    -\gamma I^D_{n^D}(t) \\
    \end{pmatrix}
    +\lambda
    \begin{pmatrix}
    1\cdot(1-s^1_1) \\
    \vdots \\
    D\cdot(1-s^D_{n^D})
    \end{pmatrix}
    \Theta(t).
\end{equation*}
Consider the Lyapunov function $V(t)=\sum_{d\in\mathcal{D}}\sum_{i\in\mathcal{I}^d}b^d_iI^d_i(t)$, where $b^d_i=\frac{dx^d_i}{\gamma}\geq 0$. The time-derivative of the Lyapunov function is:
\begin{equation*}
    \begin{aligned}
        \frac{d}{dt}V(t)&=-\sum_{d\in\mathcal{D}}\sum_{i\in\mathcal{I}^d}dx^d_iI^d_i(t)+\sum_{d\in\mathcal{D}}\sum_{i\in\mathcal{I}^d}\frac{dx^d_i\lambda (1-s^d_i)d\Theta(t)}{\gamma} \\
        & =-\Theta(t)\sum_{d\in\mathcal{D}}dm^d+\Theta(t)\sum_{d\in\mathcal{D}}\sum_{i\in\mathcal{I}^d}\frac{d^2x^d_i\lambda (1-s^d_i)}{\gamma}\\
        & =\Theta(t)\sum_{d\in\mathcal{D}}\left[ -dm^d+\frac{d^2\lambda}{\gamma}\sum_{i\in\mathcal{I}^d}x^d_i(1-s^d_i) \right].
    \end{aligned}
\end{equation*}
Combining the assumption that $\frac{\lambda d (1-s^d_i)}{\gamma}<1$ and the condition that $\sum_{i\in\mathcal{I}^d}x^d_i=m^d$, we conclude that $\frac{d}{dt}V(t)<0$ when $\Theta(t)\neq 0$. This result shows that the system $\frac{d}{dt}I^d_i(t)= -\gamma I^d_i(t)+\lambda(1-s^d_i)d\Theta(t)$ converges to $0$ as $t\rightarrow \infty$. Therefore, the system (\ref{eq:SI dynamics}) is globally asymptotically stable at the zero steady state.
\par
Suppose that the opposite condition holds, i.e. $\frac{\lambda d (1-s^d_i)}{\gamma}\geq 1$ for all $i\in\mathcal{I}^d$ and for all $d\in\mathcal{D}$. We drop the dependence on $x$ of the positive steady-state pair for simplicity.
We first show that a solution $\Bar{\Theta}_+\in(0,1]$ exists for (\ref{eq: steady Theta only}). Define $\Psi:[0,1]\rightarrow \mathbb{R}$ as:
    $\Psi(z)=\sum_{d\in\mathcal{D}} \left[ d\sum_{i\in\mathcal{I}^d}\frac{x^d_i\theta^d_i}{\gamma+\theta^d_iz} \right]$.
Since $\Psi(z)$ is a strictly decreasing function of $z$, $\Psi(0)$ achieves the maximum value and $\Psi(1)$ achieves the minimum value of $\Psi(\cdot)$. Under the condition $\frac{\lambda d (1-s^d_i)}{\gamma}\geq 1$, we obtain the inequality
\begin{equation*}
    \frac{\theta^d_i}{\gamma}\geq1\geq\frac{\theta^d_i}{\gamma+\theta^d_i}.
\end{equation*}
Multiplying by $dx^d_i$ and taking the summation over all $i\in\mathcal{I}^d$ and all $d\in\mathcal{D}$, we arrive at
\begin{equation*}
    \sum_{d\in\mathcal{D}} \left[ d\sum_{i\in\mathcal{I}^d}\frac{x^d_i\theta^d_i}{\gamma} \right]
    \geq \sum_{d\in\mathcal{D}}dm^d
    \geq
    \sum_{d\in\mathcal{D}} \left[ d\sum_{i\in\mathcal{I}^d}\frac{x^d_i\theta^d_i}{\gamma+\theta^d_i} \right],
\end{equation*}
which is equivalent to $\Psi(0)\geq\Bar{d}\geq\Psi(1)$. Hence, there exists $\Bar{\Theta}_+\in(0,1]$ such that $\Psi(\Bar{\Theta}_+)=\Bar{d}$. Moreover, $\Bar{\Theta}_+$ is unique because $\Psi(\cdot)$ is monotone. Accordingly, every element of $\Bar{I}_+$ is positive. Now, we proceed to study the stability of the positive steady-state pair $(\Bar{I}_+,\Bar{\Theta}_+)$. Define $\phi^d_i=\frac{\lambda(1-s^d_i)}{\gamma}$. Consider the following equivalent system of (\ref{eq:SI dynamics}):
\begin{equation*}
    \frac{d}{dt}I^d_i(t)=-I^d_i(t)+\phi^d_id(1-I^d_i(t))\Theta(t).
\end{equation*}
Let the density of the susceptible be $U^d_i(t)=1-I^d_i(t)$, (\ref{eq:Theta}) can be rewritten as
\begin{equation*}
\begin{aligned}
    \frac{d}{dt}\Theta(t)&=\Bar{d}^{-1}\sum_{d\in\mathcal{D}}\sum_{i\in\mathcal{I}^d}dx^d_i\frac{dI^d_i(t)}{dt}\\
    & =\Bar{d}^{-1}\sum_{d\in\mathcal{D}}\sum_{i\in\mathcal{I}^d}dx^d_i\left[ -I^d_i(t)+\phi^d_idU^d_i(t)\Theta(t) \right]\\
    & =\Theta(t)\left[ \Bar{d}^{-1}\sum_{d\in\mathcal{D}}\sum_{i\in\mathcal{I}^d}\phi^d_idU^d_i(t)dx^d_i-1 \right].\\
\end{aligned}
\end{equation*}
Consider the following Lyapunov function for the equivalent dynamical systems above:
    $V(t)=\frac{1}{2}\sum_{d\in\mathcal{D}}\sum_{i\in\mathcal{I}^d}\left[ b^d_i(U^d_i(t)-\Bar{U}^d_i)^2 \right] +\Theta(t)-\Bar{\Theta}-\Bar{\Theta}ln\frac{\Theta(t)}{\Bar{\Theta}}$,
where the parameters $b^d_i$ are defined as $b^d_i=\frac{dx^d_i}{\Bar{d}\Bar{U}^d_i}$, and the term $\Bar{U}^d_i$ denotes the steady-state quantity of $U^d_i(t)$. The time-derivative of $V(t)$ is 
\begin{equation*}
\begin{aligned}
    \frac{d}{dt}V(t)&= \sum_{d\in\mathcal{D}}\sum_{i\in\mathcal{I}^d}b^d_i(U^d_i(t)-\Bar{U}^d_i)\frac{dU^d_i(t)}{dt}+\frac{\Theta(t)-\Bar{\Theta}}{\Theta(t)}\cdot\frac{d\Theta(t)}{dt} \\
    & =\sum_{d\in\mathcal{D}}\sum_{i\in\mathcal{I}^d}b^d_i(U^d_i(t)-\Bar{U}^d_i)(I^d_i(t)-\phi^d_idU^d_i(t)\Theta(t)) \\
    & \ \ \ \ \ \ +(\Theta(t)-\Bar{\Theta})\left[ \frac{\sum_{d\in\mathcal{D}}\sum_{i\in\mathcal{I}^d}\phi^d_idU^d_i(t)dx^d_i}{\Bar{d}}-1 \right].
\end{aligned}
\end{equation*}
Since $\Bar{U}^d_i=1-\Bar{I}^d_i$ and $\Bar{d}^{-1}\sum_{d\in\mathcal{D}}\sum_{i\in\mathcal{I}^d}dx^d_i\phi^d_id\Bar{U}^d_i=1$, we obtain
\begin{equation*}
    \begin{aligned}
        \frac{d}{dt}V(t)&= 
        \sum_{d\in\mathcal{D}}\sum_{i\in\mathcal{I}^d}b^d_i(U^d_i(t)-\Bar{U}^d_i)\left[ (I^d_i(t)-\Bar{I}^d_i) -\phi^d_id(U^d_i(t)-\Bar{U}^d_i) \right] \\
        & \ \ \ \ \ \ +(\Theta(t)-\Bar{\Theta})\left[   \Bar{d}^{-1}\sum_{d\in\mathcal{D}}\sum_{i\in\mathcal{I}^d}\phi^d_idx^d_id(U^d_i(t)-\Bar{U}^d_i) \right] \\
        &= \sum_{d\in\mathcal{D}}\sum_{i\in\mathcal{I}^d}b^d_i\left[ (U^d_i(t)-\Bar{U}^d_i)(I^d_i(t)-\Bar{I}^d_i)\right] \\
        & \ \ \ \ \ \ + \Bar{d}^{-1}\sum_{d\in\mathcal{D}}\sum_{i\in\mathcal{I}^d}\frac{\phi^d_idx^d_id}{\Bar{U}^d_i}(U^d_i(t)-\Bar{U}^d_i)\left[U^d_i(t)\Theta(t)-\Bar{U}^d_i\Bar{\Theta}\right]\\
        &\ \ \ \ \ \ + \Bar{d}^{-1}\sum_{d\in\mathcal{D}}\sum_{i\in\mathcal{I}^d}\phi^d_idx^d_id \left[ (\Theta(t)-\Bar{\Theta})(U^d_i(t)-\Bar{U}^d_i) \right] \\
        &= -\sum_{d\in\mathcal{D}}\sum_{i\in\mathcal{I}^d}b^d_i(U^d_i(t)-\Bar{U}^d_i)^2\\
        &\ \ \ \ \ \ -
        \Bar{d}^{-1}\sum_{d\in\mathcal{D}}\sum_{i\in\mathcal{I}^d}\phi^d_idx^d_id \Theta(t)\left[ \frac{(U^d_i(t))^2}{\Bar{U}^d_i}-2U^d_i(t)+\Bar{U}^d_i \right].
    \end{aligned}
\end{equation*}
Since $\forall U^d_i(t)\in[0,1]$, $\frac{(U^d_i(t))^2}{\Bar{U}^d_i}-2U^d_i(t)+\Bar{U}^d_i\geq 0$, we conclude that $\frac{d}{dt}V(t)\leq 0$. Therefore, the positive steady-state pair $(\Bar{I}_+,\Bar{\Theta}_+)$ is globally asymptotically stable.

\qed
\end{proof}

\par
We focus on the positive steady-state pair $(\Bar{I}_+(x),\Bar{\Theta}_+(x))$ in the following sections, since it reveals richer properties of the herd behaviors.

\par
\subsection{Numerical computation of the steady states}
\label{sec:long term:steady state computation}
Define $M:[0,1]\rightarrow \mathbb{R}$ as $M(z)=\Bar{d}^{-1}\sum_{d\in\mathcal{D}}\left[ d\sum_{i\in\mathcal{I}^d}\frac{x^d_i\theta^d_iz}{\gamma+\theta^d_iz} \right]$.
The computation method to obtain a steady state relies on the next result.
\begin{theorem}
\label{thm:contraction}
The function $M(\cdot)$ is a contraction mapping on $[0,1]$.
\end{theorem}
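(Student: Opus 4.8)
The plan is to establish a uniform bound on the derivative of $M$ and conclude via the Mean Value Theorem. Since $M$ is a finite sum of the smooth maps $z\mapsto \frac{x^d_i\theta^d_i z}{\gamma+\theta^d_i z}$ with $\gamma>0$, it is continuously differentiable on $[0,1]$, so it suffices to exhibit a constant $k<1$ with $|M'(z)|\le k$ for every $z\in[0,1]$. The estimate $|M(z_1)-M(z_2)|=|M'(\xi)|\,|z_1-z_2|\le k|z_1-z_2|$ for $\xi$ between $z_1$ and $z_2$ is then exactly the contraction property on $[0,1]$.

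First I would differentiate term by term. Each summand $\frac{x^d_i\theta^d_i z}{\gamma+\theta^d_i z}$ has derivative $\frac{x^d_i\theta^d_i\gamma}{(\gamma+\theta^d_i z)^2}$, so
\[
M'(z)=\bar{d}^{-1}\sum_{d\in\mathcal{D}} d\sum_{i\in\mathcal{I}^d}\frac{x^d_i\theta^d_i\gamma}{(\gamma+\theta^d_i z)^2}.
\]
Every term is nonnegative, so $M'(z)\ge 0$ and $M$ is nondecreasing; moreover each denominator $(\gamma+\theta^d_i z)^2$ is nondecreasing in $z$, hence each summand is nonincreasing in $z$. Therefore $M'$ is itself nonincreasing on $[0,1]$ and attains its supremum at the left endpoint $z=0$, which reduces the problem to bounding $M'(0)$.

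The second step is that bound. Evaluating at $z=0$ collapses the denominators to $\gamma^2$, giving $M'(0)=\bar{d}^{-1}\sum_{d} d\sum_{i}\frac{x^d_i\theta^d_i}{\gamma}$. Substituting $\theta^d_i=\lambda d(1-s^d_i)$, using the per-class bound $\frac{\lambda d(1-s^d_i)}{\gamma}<1$, and invoking the identities $\sum_{i\in\mathcal{I}^d}x^d_i=m^d$ and $\sum_{d\in\mathcal{D}}dm^d=\bar{d}$, I would estimate
\[
M'(0)=\bar{d}^{-1}\sum_{d} d\sum_{i}x^d_i\frac{\lambda d(1-s^d_i)}{\gamma}<\bar{d}^{-1}\sum_{d} d\sum_{i}x^d_i=\bar{d}^{-1}\sum_{d}dm^d=1.
\]
Setting $k:=M'(0)$, the two steps combine to give $0\le M'(z)\le k<1$ throughout $[0,1]$, and the Mean Value Theorem delivers the Lipschitz estimate with constant $k$, finishing the proof.

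I expect the delicate step to be the strict inequality $M'(0)<1$. The monotonicity argument reducing $\sup_{[0,1]}M'$ to $M'(0)$ is routine, and the differentiation is mechanical; the real content lies in controlling the weighted average $M'(0)=\bar{d}^{-1}\sum_{d} d\sum_{i}x^d_i\theta^d_i/\gamma$, which coincides with $\bar{d}^{-1}\Psi(0)$ for the function $\Psi$ appearing in the proof of Theorem \ref{thm:epidemic GAS}. The contraction constant is thus governed entirely by the activity-aware per-class reproduction numbers $\lambda d(1-s^d_i)/\gamma$ together with the degree weights $d x^d_i/\bar{d}$, which sum to one; the crux is ensuring this convex combination of reproduction numbers stays strictly below unity so that $k<1$ can be taken uniformly.
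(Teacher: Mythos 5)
Your differentiation, the monotonicity argument reducing $\sup_{[0,1]}M'$ to $M'(0)$, and the Mean Value Theorem framing are all correct; indeed $M'(0)=\bar{d}^{-1}\sum_{d\in\mathcal{D}}d\sum_{i\in\mathcal{I}^d}x^d_i\theta^d_i/\gamma$ is exactly the best Lipschitz constant of $M$ on $[0,1]$. The genuine gap is the hypothesis you smuggle in to close the argument: the per-class bound $\lambda d(1-s^d_i)/\gamma<1$ for all $i,d$ appears nowhere in the statement of Theorem~\ref{thm:contraction}. Worse, it is precisely the condition of the \emph{first} case of Theorem~\ref{thm:epidemic GAS}, under which the only steady state is the disease-free pair; since $M(0)=0$, the unique fixed point your contraction would locate is the trivial one. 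The theorem is invoked in the paper to compute the positive steady state $\bar{\Theta}_+$, which exists under the opposite condition $\lambda d(1-s^d_i)/\gamma\ge 1$ — and in that regime your own computation gives $M'(0)\ge 1$, so the argument collapses. This is not a repairable estimate: when $M'(0)>1$ the map genuinely expands distances near $z=0$, so no sharpening of your bound can rescue the unconditional claim; any valid proof must either restrict the domain away from $0$ or impose a bound on the activity-aware reproduction numbers.

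For comparison, the paper's proof avoids derivatives and bounds the difference quotient term by term, writing $\lvert M(z_1)-M(z_2)\rvert = \bar{d}^{-1}\bigl(\sum_{d,i}dx^d_i\theta^d_i\beta^d_i\bigr)\lvert z_1-z_2\rvert$ and concluding the constant lies in $(0,1)$ by assuming $\theta^d_i\in[0,1]$ — an assumption likewise absent from the statement, since $\theta^d_i=\lambda d(1-s^d_i)$ can well exceed $1$ for large degrees. (The paper's manipulation also silently replaces the denominator $\gamma+\theta^d_i z$ by $\gamma+x^d_i\theta^d_i z$ and drops a factor of $1/\gamma$ in $\beta^d_i$, so its bound is only formally coherent as written.) So the two arguments differ in mechanics but founder on the same point: each needs an unstated boundedness hypothesis on $\lambda d(1-s^d_i)/\gamma$. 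Your derivative analysis has the merit of exposing this sharply — it identifies $M'(0)$, a convex combination of the per-class reproduction numbers with weights $dx^d_i/\bar{d}$, as the exact obstruction — but as a proof of the theorem as stated, and in the regime where the theorem is actually used, it does not go through.
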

\begin{proof}
Consider the component $\frac{x^d_i\theta^d_iz}{\gamma+x^d_i\theta^d_iz}$. For arbitrary $z_1,z_2\in[0,1]$, the following holds:
\begin{equation*}
\begin{aligned}
    & \quad \abs{ \frac{x^d_i\theta^d_iz_1}{\gamma+x^d_i\theta^d_iz_1}-\frac{x^d_i\theta^d_iz_2}{\gamma+x^d_i\theta^d_iz_2}} \\
    &=x^d_i\theta^d_i \abs{ \frac{\gamma(z_1-z_2)}{(\gamma+x^d_i\theta^d_iz_1)(\gamma+x^d_i\theta^d_iz_2)} } \\
    &=x^d_i\theta^d_i\beta^d_i \abs{z_1-z_2},
\end{aligned}
\end{equation*}
where $\beta^d_i=\frac{1}{(1+\gamma^{-1}x^d_i\theta^d_iz_1)(1+\gamma^{-1}x^d_i\theta^d_iz_2)}<1$. Then, summing all components, we obtain
\begin{equation*}
    \abs{M(z_1)-M(z_2)} = \frac{|z_1-z_2|}{\Bar{d}} \left( \sum_{d\in\mathcal{D}}\sum_{i\in\mathcal{I}^d}dx^d_i\theta^d_i\beta^d_i \right).
\end{equation*}
Since $\Bar{d}=\sum_{d\in\mathcal{D}}dm^d$,$m^d=\sum_{i\in\mathcal{I}^d}x^d_i$, $\theta^d_i\in[0,1]$, and $\beta^d_i\in(0,1)$, we conclude that $\Bar{d}^{-1}(\sum_{d\in\mathcal{D}}\sum_{i\in\mathcal{I}^d}dx^d_i\theta^d_i\beta^d_i)\in(0,1)$. Therefore, $M(\cdot)$ is a contraction mapping on $[0,1]$.
\qed
\end{proof}
Theorem \ref{thm:contraction} indicates that the steady state $\Bar{\Theta}$ can be obtained by the fixed-point iterations using the mapping $M(\cdot)$.

\subsection{Equilibrium analysis}
\label{sec:long term NE}
Before focusing on the NE, we first introduce an alternative interpretation of the population game in Section \ref{sec:problem setting: population game}.
Consider an equivalent $D$-player game where a player with degree $d$ plays a weighted-mixed strategy $x^d$ from the set $\mathcal{S}^d$. 
By weighted-mixed strategy, we refer to the restriction that $\1^T x^d=m^d$. Given a social state $x=(x^d,x^{-d})$, where $x^{-d}$ denotes the population states of populations other than population $d$, the expected payoff of player $d$ playing weighted-mixed strategy $x^d$ is $EF^d(x^d,x^{-d}):=(x^d)^TF^d((x^d,x^{-d}))=\sum_{i\in\mathcal{I}^d}x^d_iF^d_i((x^d,x^{-d}))$. 
Inspired by this equivalent game, we present the next result characterizing the NE of our evolutionary game.

\begin{theorem}
\label{thm:optimization problem for NE in steady state}
A social state $x^*\in\mathcal{X}$ is an NE of the game defined in Section \ref{sec:problem setting: population game} if and only if it solves the following optimization problem:
\begin{equation}
    \begin{aligned}
        \min_{x\in\mathcal{X},y\in\mathbb{R}^D} \ \ \ 
        &\sum_{d\in\mathcal{D}}-EF^d(x^d,x^{-d})+\sum_{d\in\mathcal{D}}y^dm^d \\
        \text{s.t. \  \ }
        & -F^d(x)\geq-y^d\1_{n^d}, \ \ \forall d\in\mathcal{D},\\
        &x^d\geq0, \ \ \1^Tx^d=m^d, \ \ \forall d\in\mathcal{D}.
        \label{eq:optimization prob}
    \end{aligned}
\end{equation}
\end{theorem}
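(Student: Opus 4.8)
The plan is to show that the optimization problem \eqref{eq:optimization prob} has nonnegative objective value throughout its feasible set, that this value is zero precisely at the Nash equilibria, and hence that its global minimizers coincide with $NE(F)$. This is a dynamics-coupled analogue of the classical optimization characterization of mixed-strategy Nash equilibria in bimatrix games. The self-referential dependence of $F^d_i$ on the full social state $x$ (through $\Bar{\Theta}(x)$) does not interfere, because every inequality below is evaluated pointwise at a fixed $x$ and the functions $F^d_i(x)$ are simply treated as given values at that point.

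First I would establish the key inequality. For any feasible pair $(x,y)$, the constraint $-F^d(x)\geq -y^d\1_{n^d}$ together with $\1^T x^d=m^d$ and $x^d\geq 0$ gives
\begin{equation*}
    EF^d(x^d,x^{-d})=\sum_{i\in\mathcal{I}^d}x^d_iF^d_i(x)\leq y^d\sum_{i\in\mathcal{I}^d}x^d_i=y^dm^d ,
\end{equation*}
so each summand $-EF^d+y^dm^d\geq 0$ and the objective is nonnegative on the whole feasible set. Moreover the objective is zero if and only if every summand vanishes, i.e. $EF^d=y^dm^d$ for all $d$, which by the displayed inequality is equivalent to the complementarity condition $x^d_i\,(y^d-F^d_i(x))=0$ for every $i\in\mathcal{I}^d$; indeed each term of the nonnegative sum $\sum_{i\in\mathcal{I}^d}x^d_i(y^d-F^d_i(x))=0$ must itself be zero.

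Next I would translate this complementarity into the best-response condition of Definition \ref{def:Nash equilibrium}. Since $m^d=N^d/N>0$, at least one index has $x^d_i>0$, forcing $F^d_i(x)=y^d$ there; combined with feasibility $F^d_j(x)\leq y^d$ for all $j$, this pins down $y^d=\max_{j\in\mathcal{I}^d}F^d_j(x)$. The complementarity then reads $x^d_i>0\Rightarrow F^d_i(x)=\max_{j\in\mathcal{I}^d}F^d_j(x)$, which is precisely the best-response property $x^d\in m^dBR^d(x)$, i.e. $x\in NE(F)$. Conversely, given $x^*\in NE(F)$, I would set $y^{*,d}:=\max_{i\in\mathcal{I}^d}F^d_i(x^*)$; feasibility is immediate, and the best-response property makes the complementarity hold, so the objective vanishes.

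Finally I would assemble the biconditional. Because the objective is $\geq 0$ on the feasible set and attains $0$ exactly on $NE(F)$, the value $0$ is the global minimum whenever an equilibrium exists; the existence of an NE follows from continuity of $x\mapsto F(x,\Bar{w}(x))$ --- itself a consequence of the steady-state characterization and Theorem \ref{thm:contraction}, since the contraction $M(\cdot)$ produces a fixed point $\Bar{\Theta}(x)$ that depends continuously on $x$ --- via a standard Kakutani fixed-point argument. Consequently a social state solves \eqref{eq:optimization prob} if and only if its objective is $0$, which holds if and only if it lies in $NE(F)$. The main obstacle I anticipate is not any single estimate, but keeping the logic of the biconditional airtight: one must verify the optimal value is genuinely $0$ (equivalently, that an equilibrium exists) so that ``being a minimizer'' can be upgraded to ``having zero objective,'' and one must use $m^d>0$ to promote the auxiliary variable $y^d$ from a mere upper bound on the payoffs to the maximal payoff itself.
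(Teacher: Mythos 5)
Your proof is correct, and its core is the same as the paper's: nonnegativity of the objective from the constraints, the value zero being attained exactly when each population's mass sits on its payoff maximizers, and an NE paired with a suitable $y$ achieving zero. The differences are mostly cosmetic --- you take $y^{*,d}=\max_{i\in\mathcal{I}^d}F^d_i(x^*)$ where the paper takes $y^{d*}=(m^d)^{-1}EF^d(x^*)$ (identical at an NE), and you phrase the zero-value condition as complementary slackness $x^d_i\,\bigl(y^d-F^d_i(x)\bigr)=0$, where the paper instead tests $-(\Tilde{x}^d)^TF^d(\Tilde{x})\leq-(x^d)^TF^d(\Tilde{x})$ against the vertices $x^d=\e_i m^d$; these are the same computation read in two directions. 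Where you genuinely go beyond the paper is the final assembly: you observe that upgrading ``minimizer'' to ``zero objective'' requires the optimal value to be zero, \ie, requires an NE to exist, and you supply a Kakutani fixed-point argument for that existence. The paper uses this fact silently --- its converse direction begins by invoking ``the pair $(x^*,y^*)$ under which the objective value is zero,'' which presupposes the NE constructed in the forward direction --- so strictly speaking its biconditional is proved only conditionally on existence, and your version closes that gap. One caveat on your existence sketch: deriving continuity of $\Bar{\Theta}(x)$ from Theorem \ref{thm:contraction} is delicate, since $M(0)=0$ for every $x$, so a genuine contraction on all of $[0,1]$ would force its unique fixed point to be the disease-free value rather than the positive root; the implicit-function-theorem argument of Section \ref{sec:long term NE}, which the paper already uses to differentiate $\Bar{\Theta}(x)$, is the sounder source of the continuity your Kakutani step needs.
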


\begin{proof}
The constraint $-F^d(x)\geq-y^d\1_{n^d}$ provides that $-EF^d(x)=-(x^d)^TF^d(x)\geq-(x^d)^Ty^d\1_{n^d}\geq-y^dm^d$. This implies that the objective function is nonnegative, \ie, $\sum_{d\in\mathcal{D}}-EF^d(
x)+\sum_{d\in\mathcal{D}}y^dm^d\geq0$. 

Suppose that $x^*=(x^{1*},...,x^{D*})$ is an NE of the population game. Define  $y^*=(y^{1*},...,y^{D*})$ by $y^{d*}=(m^d)^{-1}EF^d(x^{d*},x^{-d*})$ for all $d\in\mathcal{D}$. We prove that the pair $(x^*,y^*)$ is an optimal solution to problem (\ref{eq:optimization prob}) by showing that it is feasible and $\sum_{d\in\mathcal{D}}-EF^d(
x^*)+\sum_{d\in\mathcal{D}}y^{d*}m^d=0$. To prove the feasibility of $(x^d, y^*)$, it suffices to prove $-F^d(x^*)\geq-y^{d*}\1_{n^d}$. 
Since 
\begin{equation*}
    y^{d*}=(m^d)^{-1}(x^{d*})^TF^d(x^{d*},x^{-d*}),
\end{equation*}
we obtain
\begin{equation*}
    -y^{d*}\1_{n^d}=-(m^d)^{-1}(x^{d*})^TF^d(x^{d*},x^{-d*})\1_{n^d}.
\end{equation*}
Then it suffices to prove
\small
\begin{equation}
    -(x^{d*})^TF^d(x^{d*},x^{-d*})\1_{n^d}\leq
    -m^d\begin{pmatrix}
    (\e_1)^TF^d(x^{d*},x^{-d*}) \\
    \vdots \\
    (\e_{n^d})^TF^d(x^{d*},x^{-d*}) \\
    \end{pmatrix}
    \label{eq:vector ineq}
\end{equation}
\normalsize
where $\e_i$ is the vector of all zeros except for a $1$ at the $i$-th entry. 
From Definition \ref{def:Nash equilibrium}, we know that if $x^{d*}_i>0$, $s^d_i\in\arg\max_{j\in\mathcal{I}^d}F^d_j(x^*)$. 
This shows that for all $i\in\mathcal{I}^d$ such that $x^{d*}_i>0$, the values of $F^d_i(x^{d*},x^{-d*})$ are all equivalent to $\max_{j\in\mathcal{I}^d}F^d_j(x^*)$. 
Then, for all $i$ such that $x^{d*}_i>0$, since $\1^T x^{d*}=m^d$, equality holds in the $i$-th row of (\ref{eq:vector ineq}). 
For $j\in\mathcal{I}^d$ such that $x^{d*}_j=0$, inequality holds in the $j$-th row of (\ref{eq:vector ineq}) since $F^d_j(x^*)\leq\arg\max_{i\in\mathcal{I}^d}F^d_i(x^*)$. 
Hence, the pair $(x^*,y^*)$ is feasible. 
From the definition of $y^*$, we conclude that the objective function is zero under $(x^*,y^*)$. 
Therefore, $(x^*,y^*)$ solves (\ref{eq:optimization prob}). 

Suppose that $\Tilde{x}=(\Tilde{x}^1,...,\Tilde{x}^D)$ and $\Tilde{y}=(\Tilde{y}^1,...,\Tilde{y}^D)$ solve (\ref{eq:optimization prob}). Since we have found the pair $(x^*,y^*)$ under which the objective value is zero. The objective value must be zero under the pair $(\Tilde{x},\Tilde{y})$, \ie, $\sum_{d\in\mathcal{D}}-EF^d(\Tilde{x}^d,\Tilde{x}^{-d})+\sum_{d\in\mathcal{D}}\Tilde{y}^dm^d=0$. For all $x$ such that $x\geq0$ and $\1^T x^d=m^d, \forall d\in\mathcal{D}$, $-(x^d)^TF^d(\Tilde{x})\geq-\Tilde{y}^dm^d$ holds. This leads to
\begin{equation*}
\begin{aligned}
    &\sum_{d\in\mathcal{D}}-(x^d)^TF^d(\Tilde{x})
    \geq\sum_{d\in\mathcal{D}}-\Tilde{y}^dm^d \\
    =&\sum_{d\in\mathcal{D}}-EF^d(\Tilde{x}^d,\Tilde{x}^{-d})=\sum_{d\in\mathcal{D}}-(\Tilde{x^d})^TF^d(\Tilde{x}).
\end{aligned}
\end{equation*}
Since $\forall d\in\mathcal{D}$, $-(\Tilde{x}^d)^TF^d(\Tilde{x})\geq-\Tilde{y}^dm^d$ holds. Hence, $\forall d\in\mathcal{D}$, $-(\Tilde{x}^d)^TF^d(\Tilde{x})=-\Tilde{y}^dm^d$. Therefore, $\forall d\in\mathcal{D}$ and $\forall x^d$ feasible, we obtain
\begin{equation}
    -(\Tilde{x}^d)^TF^d(\Tilde{x})\leq-(x^d)^TF^d(\Tilde{x}).
    \label{eq:ineq in proof of NE and opt prob}
\end{equation}
Let $e_i$ be the vector of all zeros except for a $1$ at the $i$-th entry. Setting $x^d=e_1m^d$, $x^d=e_2m^d$, up to $x^d=e_{n^d}m^d$ in (\ref{eq:ineq in proof of NE and opt prob}), we arrive at
\begin{equation}
    (\Tilde{x}^d)^TF^d(\Tilde{x}) \geq \left( \max_{i\in\mathcal{I}^d} F^d_i(\Tilde{x}) \right) m^d.
    \label{eq:ineq becomes eq in proof of NE and opt prob}
\end{equation}
Since $\Tilde{x}\geq0$ and $\1^T \Tilde{x}^d=m^d$, equality holds in (\ref{eq:ineq becomes eq in proof of NE and opt prob}).
Thus, we conclude that for $i\in\mathcal{I}^d$ such that $\Tilde{x}^d_i>0$, $i\in\arg\max_{i\in\mathcal{I}^d}F^d_i(\Tilde{x})$. Therefore, $\Tilde{x}$ is an NE of the population game.
This completes the proof.
\qed
\end{proof}

Gradient-based algorithms can be used to numerically solve the optimization problem (\ref{eq:optimization prob}). At each iteration of the algorithm, the descent direction consists of the gradient vectors $\frac{\partial F^d_i}{\partial x}(x)$ for all $i\in\mathcal{I}^d$ and for all $d\in\mathcal{D}$. We provide below the explicit expression of the gradient vector given a social state.

With a slight abuse of notation, we specify the dependence on $x$ by writing the steady-state quantities using $\Bar{\Theta}(x)$ and $\Bar{I}^d_i(x)$. 
We express the gradient using the chain rule as $\frac{\partial F^d_i}{\partial x}(x)=-\frac{(1-s^d_i)\gamma \theta^d_i}{(\gamma+\theta^d_i\Bar{\Theta}(x))^2}\cdot\frac{\partial \Bar{\Theta}}{\partial x}(x)$. 
Next, we derive the term $\frac{\partial \Bar{\Theta}}{\partial x}(x)$ leveraging (\ref{eq:steady steate Theta}). 
Define $H:\mathbb{R}\times\mathbb{R}^n\rightarrow\mathbb{R}$ by $H(\Theta,x)=(\Bar{d})^{-1}\sum_{d\in\mathcal{D}}\left( d\sum_{i\in\mathcal{I}^d}\frac{x^d_i\theta^d_i}{\gamma+\theta^d_i\Theta} \right)-1$. 
It is obvious from the definition that $H$ is continuously differentiable with respect to both arguments. 
Suppose, given $x^*$, the pair $(\Theta^*,x^*)$ solves (\ref{eq:steady steate Theta}), \ie, $H(\Theta^*,x^*)=0$. The Jacobian of $H$ with respect to the first argument at $(\Theta^*,x^*)$ is $J_{\Theta}(\Theta^*,x^*):=\frac{\partial H}{\partial \Theta}((\Theta^*,x^*))\in\mathbb{R}$. 
From the proof of Theorem \ref{thm:epidemic GAS}, we know that $\Theta^*>0$ if $x^*$ is a social state. 
Hence, $J_{\Theta}(\Theta^*,x^*)<0$. 
Invoking the implicit function theorem, we observe that there exists a neighborhood $\mathcal{V}_{\Theta}$ of $\Theta^*$ and a neighborhood $\mathcal{V}_x$ of $x^*$, such that there is a unique continuously differentiable function $h:\mathcal{V}_x\rightarrow \mathcal{V}_{\Theta}$ satisfying $h(x^*)=\Theta^*$ and $H(h(x^*),x^*)=0$. 
Furthermore, the derivative of $h(\cdot)$ can be expressed as
\begin{equation}
    \frac{\partial h}{\partial x^d_i}(x^*)=-(J_{\Theta}(h(x^*),x^*))^{-1} \frac{\partial H}{\partial x^d_i}(h(x^*),x^*). 
    \label{eq:IFT gradient of h}
\end{equation}
Thus, the term $\frac{\partial \Bar{\Theta}}{\partial x}(x)$ can be obtained directly using (\ref{eq:IFT gradient of h}) at the given social state $x$. 
Therefore, the explicit gradient vectors are of the form
\small
\begin{equation}
    \frac{\partial F^d_i}{\partial x}(x)=\frac{(1-s^d_i)\gamma \theta^d_i}{(\gamma+\theta^d_i\Bar{\Theta}(x))^2} 
    \left( (J_{\Theta}(\Bar{\Theta}(x),x))^{-1} \frac{\partial H}{\partial x}(\Bar{\Theta}(x),x)  \right),
    \label{eq:gradient of F}
\end{equation}
\normalsize
where $\Bar{\Theta}(x)$ is obtained from the fixed-point iterations using the mapping $M(\cdot)$ and
\begin{equation}
    \frac{\partial H}{\partial x}(\Bar{\Theta}(x),x)=\frac{1}{\Bar{d}}\begin{pmatrix}
    \frac{1\cdot\theta^1_1}{\gamma+\theta^1_1 \Bar{\Theta}(x)} & \cdots & \frac{D\cdot\theta^D_{n^D}}{\gamma+\theta^D_{n^D} \Bar{\Theta}(x)}
    \end{pmatrix}.
\end{equation}
\par
In general, the optimization problem (\ref{eq:optimization prob}) is nonconvex. However, gradient-based algorithms are still applicable  to find stationary points, \ie, points with sufficiently small gradients. Moreover, we know from the proof of Theorem \ref{thm:optimization problem for NE in steady state} that the global optimal point yields a zero objective value. Therefore, we can test the stationary point obtained using gradient-based algorithms and using the objective value to determine whether it is a potential global optimal point, \ie, an NE social state.

\subsection{Long-term property of the game}
\label{sec:long term:long term property of the game }
Stability studies the structural properties of the games under which sequential plays following specific revision protocols converge to an NE. In this section, we analyze players' incentives to change their strategies when the game is played sequentially.

\par
Let $DF(x):=\frac{d}{dx}F(x)\in\mathbb{R}^{n\times n}$ denote the derivative of the payoffs with respect to the social state. From (\ref{eq:gradient of F}), we can express $DF$ as
\begin{equation}
        DF(x)= 
        \frac{\gamma}{\Bar{d}}(J_{\Theta}(\Bar{\Theta}(x),x))^{-1}
        \begin{pmatrix}
        \frac{(1-s^1_1)\theta^1_1}{(\gamma+\theta^1_1 \Bar{\Theta}(x))^2} \\
        \vdots \\
        \frac{(1-s^D_{n^D})\theta^D_{n^D}}{(\gamma+\theta^D_{n^D} \Bar{\Theta}(x))^2} \\
        \end{pmatrix}
        \begin{pmatrix}
        \frac{1\cdot \theta^1_1}{\gamma+\theta^1_1\Bar{\Theta}(x)} \\
         \vdots \\ \frac{D\cdot \theta^D_{n^D}}{\gamma+\theta^D_{n^D}\Bar{\Theta}(x)}
        \end{pmatrix}^T.
        \label{eq:DF}
\end{equation}
\normalsize
In some classes of games, such as potential games and stable games, various evolutionary dynamics show global stability. These games require special structures of the derivative matrix $DF(x)$.
Next, we investigate the structural properties of (\ref{eq:DF}).

\begin{theorem}
\label{thm:submodular game long term}
Under the assumption that every information broadcast takes place at the steady state of (\ref{eq:SI dynamics}), \ie, $\frac{1}{\tau} \rightarrow \infty$, the game defined in Section \ref{sec:problem setting: population game} is a submodular game.
\end{theorem}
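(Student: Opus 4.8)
The plan is to read off the submodular structure directly from the closed form of the payoff Jacobian in (\ref{eq:DF}), rather than to recompute second derivatives. Recall that a population game is submodular (a game of strategic substitutes) precisely when, for every ordered pair of populations $d,d'$ and every pair of strategies $i>i'$ in $\mathcal{I}^d$, the marginal advantage $F^d_i-F^d_{i'}$ of the higher (more inactive) strategy is \emph{nonincreasing} in the opponents' play, i.e.\ $\partial(F^d_i-F^d_{i'})/\partial x^{d'}_j$ is nonincreasing in the strategy index $j$. Since, under $\frac{1}{\tau}\to\infty$, each payoff $F^d_i$ in (\ref{eq:payoff}) depends on the whole social state only through the scalar steady-state aggregate $\bar\Theta(x)$ solving (\ref{eq:steady steate Theta}), the game is aggregative and its Jacobian is the scaled rank-one matrix $DF(x)=c\,\mathbf{a}\mathbf{b}^\top$ displayed in (\ref{eq:DF}), with scalar $c=\frac{\gamma}{\bar d}\,J_\Theta(\bar\Theta(x),x)^{-1}$ and nonnegative factor vectors $\mathbf{a},\mathbf{b}$ indexed by the pairs $(d,i)$. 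Reducing the submodularity test to a sign-and-monotonicity statement about $c$, $\mathbf{a}$, and $\mathbf{b}$ is the whole content of the proof.

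First I would fix the sign of the scalar. By the argument in the proof of Theorem \ref{thm:epidemic GAS}, at any social state the positive steady state satisfies $\bar\Theta(x)>0$, so the map $\Psi$ (equivalently $H$) is strictly decreasing there and $J_\Theta(\bar\Theta(x),x)<0$; hence $c<0$. Next I would record two elementary monotonicity facts in the single variable $\theta$. Writing $a(\theta)=\frac{(1-s)\theta}{(\gamma+\theta\bar\Theta)^2}$ with $1-s=\theta/(\lambda d)$ and $b(\theta)=\frac{d\,\theta}{\gamma+\theta\bar\Theta}$, a direct differentiation gives $a'(\theta)=\frac{2\theta\gamma}{\lambda d(\gamma+\theta\bar\Theta)^3}>0$ and $b'(\theta)=\frac{d\gamma}{(\gamma+\theta\bar\Theta)^2}>0$, so both $a$ and $b$ are nonnegative and strictly increasing in $\theta$. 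Because the strategies are ordered increasingly, $\theta^d_i=\lambda d(1-s^d_i)$ is nonincreasing in $i$; consequently each entry $a_{(d,i)}$ and $b_{(d,i)}$ is nonnegative and nonincreasing in the strategy index $i$.

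Combining these, for $i>i'$ I would write
\[
\frac{\partial (F^d_i-F^d_{i'})}{\partial x^{d'}_j}=c\,\bigl(a_{(d,i)}-a_{(d,i')}\bigr)\,b_{(d',j)},
\]
and observe that $a_{(d,i)}-a_{(d,i')}\le 0$ while $c<0$, so the prefactor $c\,(a_{(d,i)}-a_{(d,i')})$ is nonnegative; multiplying by $b_{(d',j)}\ge 0$, which is nonincreasing in $j$, shows the whole expression is nonincreasing in $j$. This is exactly the decreasing-differences condition, and it holds for every ordered pair of populations $d,d'$ (the case $d'=d$ included), which establishes that the game is submodular.

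The main obstacle is conceptual rather than computational: matching this clean sign pattern to the formal definition of a submodular game when the action set of each population is the weighted simplex $\{x^d\ge 0:\1^\top x^d=m^d\}$ rather than a scalar interval. The decreasing-differences inequality must be read along feasible directions $\e_j-\e_{j'}$ on the simplex, so what is really needed is the \emph{monotonicity in $j$} of $\partial(F^d_i-F^d_{i'})/\partial x^{d'}_j$, not merely its sign; this is precisely why the monotonicity of $\mathbf{a}$ and $\mathbf{b}$ in the strategy index, and not only their nonnegativity, is essential. A minor point to dispatch is the degenerate strategy $s^d_i=1$, for which $\theta^d_i=0$ makes the corresponding entries of $\mathbf{a},\mathbf{b}$ vanish; since zero entries respect all the inequalities above, the conclusion is unaffected.
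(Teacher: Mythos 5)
Your proof is correct and takes essentially the same approach as the paper's: both exploit the rank-one factorization of the Jacobian (\ref{eq:DF}) into a negative scalar $\frac{\gamma}{\bar{d}}(J_{\Theta}(\bar{\Theta}(x),x))^{-1}<0$ times an outer product of nonnegative vectors, and both reduce the submodularity test $(\Sigma^d)^T DF^d_c(x)\Sigma^c\leq 0$ to the monotonicity of each factor vector's entries in the strategy index, so that products of consecutive differences carry the right sign. The only differences are presentational: you parametrize the monotonicity in $\theta$ and state the sign of $J_\Theta$ explicitly, while the paper checks that the same two functions are decreasing in $s$ and absorbs the (negative) constant when flipping the inequality.
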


\begin{proof}
Let $d$ and $c$ be two populations in set $\mathcal{D}$, $d$ and $c$ can represent the same population. Let $DF^d_c(x)\in\mathbb{R}^{n^d\times n^c}$ denote the block in (\ref{eq:DF}) corresponding to $\frac{dF^d(x)}{dx^c}$. We obtain the following:
\begin{equation*}
    DF^d_c(x)=\gamma(J_{\Theta}(\Bar{\Theta}(x),x))^{-1}\mu^d\cdot (\nu^c)^T,
\end{equation*}
where $\mu^d=\begin{pmatrix}
\frac{(1-s^d_1)\theta^d_1}{(\gamma+\theta^d_i\Bar{\Theta}(x))^2} & \cdots & \frac{(1-s^d_{n^d})\theta^d_{n^d}}{(\gamma+\theta^d_i\Bar{\Theta}(x))^2}
\end{pmatrix}^T\in\mathbb{R}^{n^d}$ and 
$\nu^c=\begin{pmatrix}
\frac{c\theta^c_1}{\gamma+\theta^c_1\Bar{\Theta}(x)}
& \cdots & 
\frac{c\theta^c_{n^c}}{\gamma+\theta^c_{n^c}\Bar{\Theta}(x)}
\end{pmatrix}^T\in\mathbb{R}^{n^c}$.
Define the matrix
\begin{equation*}
    \Sigma^d=\begin{pmatrix}
    -1 & 0 &  \cdots & 0 \\
    1 & -1 & \cdots & 0 \\
    0 & 1 & \cdots & 0 \\
    \vdots & \ddots & \ddots & \vdots \\
    0 & 0 & \cdots & 1 \\
    \end{pmatrix}
    \in\mathbb{R}^{n^d \times (n^d-1)},\ \  \forall d\in\mathcal{D}.
\end{equation*}
To prove the submodular property of the payoff functions, we need to show that the inequality $(\Sigma^d)^TDF^d_c(x)\Sigma^c\leq0$ holds for all $d,c \in\mathcal{D}$ and for all $x\in\mathcal{X}$. Ignoring positive constant terms, we obtain the following equivalent condition for all $d,c \in\mathcal{D}$,
\begin{equation*}
    \begin{pmatrix}
    \mu^d_2-\mu^d_1 \\
    \vdots \\
    \mu^d_{n^d}-\mu^d_{n^d-1} \\
    \end{pmatrix}
    \begin{pmatrix}
    \nu^c_2-\nu^c_1 & \cdots &\nu^c_{n^c}-\nu^c_{n^c-1}
    \end{pmatrix}
    \geq \mathrm{0}, 
\end{equation*}
where the dependence on $x$ is through $\mu^d$ and $\nu^c$.
A sufficient condition is
\begin{equation}
    \left(\frac{(1-s^d_{k+1})\theta^d_{k+1}}{(\gamma+\theta^d_{k+1}\Bar{\Theta}(x))^2}-\frac{(1-s^d_{k})\theta^d_{k}}{(\gamma+\theta^d_{k}\Bar{\Theta}(x))^2} \right) 
    \cdot \left( \frac{c\theta^c_{l+1}}{\gamma+\theta^c_{l+1}\Bar{\Theta}(x)}-\frac{c\theta^c_{l}}{\gamma+\theta^c_{l}\Bar{\Theta}(x)} \right) 
    \geq 0,
    \label{eq:supermodular proof ineq}
\end{equation}
for all $d,c\in\mathcal{D}$ and all $k\leq n^d-1$ and $l
\leq n^c-1$. Substituting $\theta^d_i=\lambda d (1-s^d_i)$, we observe that both $\frac{\lambda d(1-z)^2}{(\gamma+ \lambda d (1-z)\Bar{\Theta}(x))^2}$ and $\frac{c\lambda d (1-z)}{\gamma+ \lambda d (1-z) \Bar{\Theta}(x)}$ are decreasing functions of $z$ on $z\in[0,1]$. 
This proves (\ref{eq:supermodular proof ineq}). Therefore, we conclude the results. 
\qed
\end{proof}

A straightforward explanation of the above result is that the decisions in our evolutionary game are strategic substitutes; \ie, when a player with degree $d\in\mathcal{D}$ changes her strategy from $s^d_i$ to $s^d_j$ such that $s^d_i < s^d_j$, $\forall i,j\in\mathcal{I}^d$, other players, say players with degree $d'$, are more likely to choose a strategy closer to $s^{d'}_0$ from the set $\mathcal{S}^{d'}$, and vice versa.
This fact is also supported by the observations of the human behaviors under an epidemic. People tend to stay at home when the streets become crowded. There is a higher probability to get infected with a higher social interactivity. On the contrary, people tend to be outdoors if the others choose to stay at home. 

\par
The counterpart to a submodular game is a supermodular game \cite{sandholm2010population}, where decisions of players are strategic complements. In supermodular games, increases in strategies of other players result in a relatively higher strategy of a given player. This isotone property of the payoff function makes the best response correspondences of players well-behaved and the best-response dynamics with stochastic perturbation converge to perturbed NE of the game \cite{sandholm2010population}. The behavior of learning dynamics in submodular games is more involved \cite{JACKSON201595}. 
However, following \cite{topkis1979equilibrium} and \cite{dianetti2019submodular}, we obtain guarantees on the stability of certain learning processes. 
\par
Consider the best-response dynamics of the form
\begin{equation}
    x^d_{[k+1]}=\min \{m^dBR^d(x_{[k]})\}, \quad \forall d\in\mathcal{D},
    \label{eq:BRD}
\end{equation}
where the subscript $[k]$ represents iteration $k$ and $\min\{\cdot\}$ stands for choosing the least component. Let $x_{\min}^d=(m^d,0,...,0)^T$ and $x^d_{\max}=(0,...,0,m^d)^T$ denote the minimal and maximal state of population $d$. 
Let $x_{\min}=(x^1_{\min},...,x^D_{\min})$ and $x_{\max}=(x^1_{\max},...,x^D_{\max})$ denote the minimal and maximal social state. The following result \cite{topkis1979equilibrium} characterizes the stability of the learning process (\ref{eq:BRD}).
\begin{corollary}
\label{thm:BRD convergence}
There exists a minimal point $x^*_{\min}\in NE(F)$ and a maximal point $x^*_{\max}\in NE(F)$. The best-response dynamics (\ref{eq:BRD}) generate a monotonically increasing sequence which converges to $x^*_{\min}$ when the initial point is $x_{\min}$; (\ref{eq:BRD}) generates a monotonically decreasing sequence which converges to $x^*_{\max}$ when the initial point is $x_{\max}$.
\end{corollary}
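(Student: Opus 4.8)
The plan is to read Corollary~\ref{thm:BRD convergence} as the extremal-equilibrium statement of Topkis for submodular games, carried out on the correct lattice. First I would equip each strategy set $\mathcal{X}^d$ with the \emph{first-order stochastic dominance} order: for $x^d,\tilde{x}^d\in\mathcal{X}^d$ declare $x^d\preceq\tilde{x}^d$ iff $\sum_{k\geq i}x^d_k\leq\sum_{k\geq i}\tilde{x}^d_k$ for every $i\in\mathcal{I}^d$, \ie\ $\tilde{x}^d$ places more weight on the more cautious strategies. This is exactly the order implicit in the difference matrices $\Sigma^d$ used in the proof of Theorem~\ref{thm:submodular game long term}. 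Under it each $\mathcal{X}^d$ is a complete lattice (meets and joins are obtained by taking componentwise minima and maxima of the survival functions $i\mapsto\sum_{k\geq i}x^d_k$, which remain valid survival functions of mass $m^d$), and the product $\mathcal{X}=\prod_{d}\mathcal{X}^d$ is a complete lattice whose bottom element is $x_{\min}$ and whose top element is $x_{\max}$, as defined before the statement.

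Next I would convert the submodularity of Theorem~\ref{thm:submodular game long term} into monotonicity of the best-response map. Because each $F^d$ is submodular in the FOSD order and $\mathcal{X}^d$ is a compact sublattice, Topkis's monotonicity theorem shows that the maximizer set $m^dBR^d(x)$ is a sublattice that is \emph{antitone} (decreasing in the strong set order) in $x^{-d}$; hence its least element, which is precisely the selection $\min\{m^dBR^d(\cdot)\}$ appearing in~\eqref{eq:BRD}, defines an antitone self-map $B:\mathcal{X}\to\mathcal{X}$. The composition $B\circ B$ is then isotone, so Tarski's fixed-point theorem yields a smallest fixed point $\underline z$ and a largest fixed point $\bar z$ of $B\circ B$. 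Iterating~\eqref{eq:BRD} from $x_{\min}$ gives, by the standard antitone-composition argument, an increasing even subsequence $x_{\min}=x_{[0]}\preceq x_{[2]}\preceq\cdots$ and a decreasing odd subsequence $x_{[1]}\succeq x_{[3]}\succeq\cdots$; the former climbs to $\underline z$ and the latter descends to $\bar z$, and starting from $x_{\max}$ exchanges the two roles.

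The delicate step, and the one I expect to be the main obstacle, is certifying that these monotone limits are genuine Nash equilibria (fixed points of $B$) rather than a two-cycle $B(\underline z)=\bar z$, $B(\bar z)=\underline z$. Here I would exploit the \emph{aggregative} structure: by~\eqref{eq:payoff} and~\eqref{eq:steady I of Theta} every payoff depends on the social state only through the scalar $\Bar{\Theta}(x)$, so the iteration projects onto a one-dimensional map $g:[0,1]\to[0,1]$ of the aggregate with $\Bar{\Theta}(B(x))=g(\Bar{\Theta}(x))$. Since best responses are antitone while a more cautious population lowers $\Bar{\Theta}$, the map $g$ is non-increasing; mere monotonicity would not preclude a two-cycle of $g\circ g$ (e.g.\ $g(\theta)=1-\theta$), but the contraction estimate behind Theorem~\ref{thm:contraction} bounds the slope of the steady-state relation and forces $g\circ g$ to possess the single fixed point $\Bar{\Theta}^{*}=g(\Bar{\Theta}^{*})$. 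Consequently $\Bar{\Theta}(\underline z)=\Bar{\Theta}(\bar z)=\Bar{\Theta}^{*}$, so both $\underline z$ and $\bar z$ are the minimal best response against the \emph{same} aggregate $\Bar{\Theta}^{*}$, whence $B(\underline z)=\underline z$ and $B(\bar z)=\bar z$; setting $x^{*}_{\min}:=\underline z$ and $x^{*}_{\max}:=\bar z$ then identifies the smallest and largest elements of $NE(F)$ by Definition~\ref{def:Nash equilibrium} applied at $\Bar{\Theta}^{*}$, and the two monotone subsequences deliver the claimed convergence. The only remaining routine points are the compactness and continuity needed for the monotone iterates to attain the Tarski fixed points, which follow from the Lipschitz dependence of $\Bar{\Theta}(x)$ recorded in Section~\ref{sec:long term:steady state computation}.
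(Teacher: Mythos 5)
The paper never actually proves Corollary \ref{thm:BRD convergence}: it is imported from \cite{topkis1979equilibrium} on the strength of the submodularity established in Theorem \ref{thm:submodular game long term}, so your attempt is necessarily a self-contained route rather than a reconstruction of an argument in the text. Its first three steps are sound, and indeed they are the right formalization of the paper's $\Sigma^d$ computation: the first-order stochastic dominance order makes each $\mathcal{X}^d$ a complete lattice with bottom $x^d_{\min}$ and top $x^d_{\max}$; decreasing differences make the least-selection map $B$ of (\ref{eq:BRD}) antitone; and an antitone self-map yields an increasing even subsequence and a decreasing odd subsequence bracketing fixed points of $B\circ B$.

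The genuine gap is the step you yourself flagged as delicate, and it cannot be closed the way you propose. Theorem \ref{thm:contraction} says that $M(\cdot)$ is a contraction in the scalar $z$ \emph{for a fixed social state} $x$; it gives uniqueness of $\bar\Theta(x)$ for each $x$ and implies nothing about the composite aggregate map $g=\bar\Theta\circ\beta$, where $\beta(\Theta)$ is the least best response to a broadcast aggregate $\Theta$. Because the strategy sets are finite, $\beta$, hence $g$, is a step function that jumps wherever the argmax in Definition \ref{def:Nash equilibrium} switches, so no slope or contraction estimate can be transferred to it, and nothing in the paper controls where its jumps lie. Worse, the two-cycle you must exclude genuinely occurs inside the paper's hypotheses: take a single population of degree $d=4$, $\lambda=1$, $\gamma=1$, strategies $s_1=0$, $s_2=1/2$ (so $\theta_1=4$, $\theta_2=2$, and the condition $\lambda d(1-s^d_i)/\gamma\geq 1$ of Theorem \ref{thm:epidemic GAS} holds), and $r=-0.87$, which violates the dominance condition of Theorem \ref{thm:dominant strategy}. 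Then $\bar\Theta(x_{\min})=3/4$ and $\bar\Theta(x_{\max})=1/2$, while $F_2-F_1=r/2+4\bar\Theta/(1+4\bar\Theta)-\bar\Theta/(1+2\bar\Theta)$ equals $+0.015$ at $\bar\Theta=3/4$ and $-0.018$ at $\bar\Theta=1/2$: the best response to the all-active state is all-cautious and vice versa, so (\ref{eq:BRD}) started at $x_{\min}$ alternates $x_{\min}\to x_{\max}\to x_{\min}\to\cdots$ forever, $g\circ g$ has two distinct fixed points, and the (existing, interior) equilibrium is never approached. Independently of this example, your step 3 already contradicts the conclusion: antitonicity gives $x_{[2]}=B(x_{[1]})\preceq B(x_{[0]})=x_{[1]}$, so the full sequence can be monotonically increasing only if it is stationary after one iteration; monotone convergence of the full best-response sequence from extremal points is the signature of \emph{isotone} best replies, i.e., the strategic-complements (cost-submodular) setting of \cite{topkis1979equilibrium}, which your own (correct) antitone analysis shows this game does not have. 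Any valid proof of the corollary as stated therefore needs additional hypotheses (for instance the dominance regime of Theorem \ref{thm:dominant strategy}, which makes $\beta$ constant) or a modified dynamic such as the damped update (\ref{eq:BRD continuous}); it cannot follow from Theorems \ref{thm:submodular game long term} and \ref{thm:contraction} alone.
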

\par
The results in Corollary \ref{thm:BRD convergence} have the following interpretations.
The initialization at $x_{\min}$ corresponds to the situation where players pay little attention to potential infections caused by the epidemic. 
In this scenario, players are at high SAIs and interact actively over the network. 
Through sequential revisions of strategies, players gradually become aware of the potential risks  from physical interactions and they become increasingly careful about their physical interactions with others. 
Hence, the sequence generated by (\ref{eq:BRD}) starting from $x_{\min}$ is increasing. 
The convergence to $x^*_{\min}$ shows that by naively best-responding to current payoffs, the population can eventually reach a point where no one has an incentive to further revise her strategies. 
On the contrary, the scenario where the starting point is $x_{\max}$ indicates cautious plays at the beginning, since players have no information about the potential consequences of the epidemic. Through sequential plays, players know more about the epidemic and they become more audacious, \ie, the subsequent social states generated by (\ref{eq:BRD}) after $x_{\max}$ are decreasing. And finally, there is a point where no one is willing to take more risks (e.g., going to the supermarket without wearing a mask). 
\par
Note that the maximal and the minimal points $x^*_{\max}$ and $x^*_{\min}$ do not, in general, correspond to the equilibrium points where the payoffs of players reach the maximum and the minimum, respectively \cite{dianetti2019submodular}. 
Corollary \ref{thm:BRD convergence} enables the monotone convergence to NE of the evolutionary dynamics of the form:
\begin{equation}
    \dot{x}^d=\min\{m^dBR^d(x)\}-x^d, \quad \forall d\in\mathcal{D},
    \label{eq:BRD continuous}
\end{equation}
where $\min\{\cdot\}$ selects the least element from a set.
The reason lies in the discretization of (\ref{eq:BRD continuous}): $x^d(t+\delta)=\delta \min\{m^dBR^d(x(t))\}+(1-\delta)x^d(t)$, which has the interpretation that in a small period $\delta$, only $\delta$ portion of the population revises their strategies to the one obtained using the best-responses. 
The updates (\ref{eq:BRD}) correspond to $\delta=1$.
Suppose that $t_k$ and $t_{k+1}$ are two time instances corresponding to iteration $[k]$ and $[k+1]$ in (\ref{eq:BRD}).
Since starting from $x_{\min}$, (\ref{eq:BRD}) yields $x^d_{[k]}\leq x^d_{[k+1]}$, and for any $\delta\in (0,1)$, $x^d(t_k+\delta)=\delta \min\{m^dBR^d(x(t_k))\}+(1-\delta)x^d(t_k)=\delta x^d_{[k+1]}+(1-\delta)x^d_{[k]}$.
Then, $x^d_{[k]}\leq x^d(t_k+\delta)\leq x^d_{[k+1]}$.
If we pick $\delta, \delta_1, \cdots, \delta_{end}\in [0,1]$ such that  $0<\delta_1<\delta_2<\cdots<\delta_{end}<1$, the same relation follows: $x^d_{[k]}\leq x^d(t_k+\delta_1)\leq \cdots \leq x^d(t_k+\delta_{end}) \leq x^d_{[k+1]}$. 
Therefore, the discretization of (\ref{eq:BRD continuous}) is monotone between $t_{k}$ and $t_{k+1}$ for arbitrary choices of an increasing sequence of $\delta$. This shows the monotonicity of (\ref{eq:BRD continuous}) and its convergence to the NE from $x_{\min}$, when we let $\delta \rightarrow 0$. The scenario where the starting point is $x_{\max}$ follows the similar reasoning.

\subsection{Misinformation broadcasting}
\label{sec:long term:misinformation}
Information plays an important role in shaping human behaviors. 
In the information broadcast, the media can control the strategies of players through the design of the information.
In this subsection, we investigate manipulations on players' observations of the status of the epidemic. 
For simplicity reasons, we assume that the information broadcast only contains $\bar\Theta$, which represents the average probability that a link on the network connects to an infected node, at a given social state.
This assumption can be understood as the total infected number of people  reported by the news. 
In addition, we assume that the strategy sets $\mathcal{S}^d$ are identical for all $d\in\mathcal{D}$ with the minimal element denoted by $s_{\min}$ and the maximal element denoted by $s_{\max}$. 
\par
Define $\mathcal{F}^d(s):[s_{\min},s_{\max}]\rightarrow \mathbb{R}$ by $ \mathcal{F}^d(s)=sr-(1-s)\eta^d_i=sr-(1-s)\mathcal{O}^d_i(\bar{I}^d_i)$ for all $d\in\mathcal{D}$.
The function $\mathcal{F}^d(s)$ extends the payoff of players with degree $d\in\mathcal{D}$ to a continuous function defined on the continuum $[s_{\min},s_{\max}]$.
This extension helps analyze the properties of the payoff when the strategies are perturbed.



Combining (\ref{eq:payoff}) and (\ref{eq:steady I of Theta}), we obtain
\begin{equation*}
    \mathcal{F}^d(s)=sr-(1-s)\frac{\lambda d (1-s) \bar{\Theta}}{\gamma+\lambda d (1-s)\bar{\Theta}}.
\end{equation*}
The derivative of $\mathcal{F}^d$ is 
\begin{equation}
    \frac{d}{ds}\mathcal{F}^d(s)=r+\frac{2\bar{\Theta}\frac{\lambda d (1-s)}{\gamma}+\bar{\Theta}^2(\frac{\lambda d (1-s)}{\gamma})^2}{(1+\bar{\Theta}\frac{\lambda d (1-s)}{\gamma})^2}.
    \label{eq:derivative  dF(s)/ds }
\end{equation}
The existence of dominant strategies depends on the sign of $\frac{d}{ds}\mathcal{F}^d(s)$. 
We observe that $\frac{d}{ds}\mathcal{F}^d(s)$ is a strictly increasing function of $\Bar{\Theta}$ under the condition that $\frac{\lambda d(1-s^d_i)}{\gamma}\geq 1$ for all $i\in\mathcal{I}^d$ and all $d\in\mathcal{D}$.
Therefore, the smallest value of $\frac{d}{ds}\mathcal{F}^d(s)$ appears when $\Bar{\Theta}$ approaches $0$ and the largest appears when $\Bar{\Theta}$ approaches $1$. The following result presents the conditions on the value of the relative reward $r$ for achieving dominant strategies.

\begin{theorem}
\label{thm:dominant strategy}
Under the assumption that $\frac{\lambda d(1-s^d_i)}{\gamma}\geq 1$ for all $i\in\mathcal{I}^d$ and all $d\in\mathcal{D}$, $s_{\min}$ is dominant for all $d\in\mathcal{D}$ if the following inequality holds:
\begin{equation}
    \abs{r} \geq 1-\frac{1}{\left( 1+\frac{\lambda d (1-s_{\min})}{\gamma}\right)^2}, \quad \forall d\in\mathcal{D}.
    \label{eq:inequality for dominant strategy}
\end{equation}
\end{theorem}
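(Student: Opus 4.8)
The plan is to show that the stated bound on $|r|$ forces the continuous payoff $\mathcal{F}^d$ to be non-increasing on $[s_{\min},s_{\max}]$ for \emph{every} admissible environment, so that the most cautious action $s_{\min}$ maximizes $\mathcal{F}^d$ whatever the rest of the populations do. Since the continuous extension treats $\bar\Theta$ as a parameter fixed by the given social state and $s$ as the deviating player's own action, dominance of $s_{\min}$ for population $d$ amounts to $\mathcal{F}^d(s_{\min})\geq\mathcal{F}^d(s)$ for all $s\in\mathcal{S}^d$ at every admissible $\bar\Theta\in(0,1]$. A sufficient condition is $\frac{d}{ds}\mathcal{F}^d(s)\leq 0$ for all $s\in[s_{\min},s_{\max}]$ and all $\bar\Theta\in(0,1]$, so the whole argument reduces to bounding the derivative in (\ref{eq:derivative  dF(s)/ds }) uniformly in both $s$ and $\bar\Theta$.

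First I would simplify the positive term in (\ref{eq:derivative  dF(s)/ds }) by setting $u:=\bar\Theta\,\frac{\lambda d(1-s)}{\gamma}\geq 0$, which rewrites it as
\begin{equation*}
\frac{2u+u^2}{(1+u)^2}=\frac{(1+u)^2-1}{(1+u)^2}=1-\frac{1}{(1+u)^2},
\end{equation*}
so that $\frac{d}{ds}\mathcal{F}^d(s)=r+1-(1+u)^{-2}$. Because $1-(1+u)^{-2}$ is strictly increasing in $u$, and $u$ is increasing in $\bar\Theta$ and decreasing in $s$, the positive term attains its maximum over $(s,\bar\Theta)\in[s_{\min},s_{\max}]\times(0,1]$ at the corner $\bar\Theta=1$, $s=s_{\min}$, where $u=\frac{\lambda d(1-s_{\min})}{\gamma}$. (The standing assumption $\frac{\lambda d(1-s^d_i)}{\gamma}\geq1$ keeps us in the endemic regime of Theorem \ref{thm:epidemic GAS}, guaranteeing $\bar\Theta\in(0,1]$.) Hence, uniformly in $s$ and $\bar\Theta$,
\begin{equation*}
\frac{d}{ds}\mathcal{F}^d(s)\leq r+1-\frac{1}{\left(1+\frac{\lambda d(1-s_{\min})}{\gamma}\right)^2}.
\end{equation*}

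Finally, since $r\in\mathbb{R}_-$ we have $|r|=-r$, so the right-hand side above is $\leq 0$ precisely when (\ref{eq:inequality for dominant strategy}) holds. This makes $\frac{d}{ds}\mathcal{F}^d(s)\leq 0$ throughout, hence $\mathcal{F}^d$ is non-increasing and $s_{\min}$ is dominant for population $d$; running the argument for each $d\in\mathcal{D}$ gives the claim. The computation is a one-variable monotonicity argument, so I do not expect a deep obstacle; the only points needing care are (i) recognizing that the worst-case environment is the supremal $\bar\Theta=1$ paired with the extreme action $s=s_{\min}$, i.e.\ that the \emph{same} corner both maximizes the positive term and governs dominance, and (ii) tracking the sign of $r$ so that the derivative inequality flips into the stated form involving $|r|$.
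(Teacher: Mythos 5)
Your proposal is correct and follows essentially the same route as the paper: treat $\bar\Theta$ as a free parameter, note that the positive term in $\frac{d}{ds}\mathcal{F}^d(s)$ is maximized at $\bar\Theta=1$ and $s=s_{\min}$, and require the derivative to be nonpositive there, which (using $r<0$, so $|r|=-r$) is exactly inequality (\ref{eq:inequality for dominant strategy}). Your substitution $u=\bar\Theta\,\lambda d(1-s)/\gamma$ and the rewriting of the positive term as $1-(1+u)^{-2}$ merely make explicit the monotonicity claims the paper states tersely.
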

\begin{proof}
From (\ref{eq:derivative  dF(s)/ds }), we know that the minimal value of the second term on the right-hand side is $0$ when $\Bar{\Theta}=0$. In this case, $\frac{d}{ds}\mathcal{F}^d(s)\leq0$ since $r$ is negative. Therefore, dominant strategies for all players can only appear if $\frac{d}{ds}\mathcal{F}^d(s)\leq0$ when the second term on the right-hand side of (\ref{eq:derivative  dF(s)/ds }) takes the maximal value. 
Hence, by requiring (\ref{eq:derivative  dF(s)/ds }) to be negative when $\Bar{\Theta}=1$, we arrive at (\ref{eq:inequality for dominant strategy}).
\qed
\end{proof}
According to the above result, when the relative reward $r$ satisfies (\ref{eq:inequality for dominant strategy}), a player chooses $s_{\min}$ no matter what she observes from the information broadcast. 
We regard the relative reward satisfying the equality in (\ref{eq:inequality for dominant strategy}) as the critical relative reward and denote it by $r_{\mathrm{crit}}$.
Condition (\ref{eq:inequality for dominant strategy}) is relatively demanding, since it requires the reward perceptions of all players in the populations to go to one extreme, \ie, the reward of socially inactive is low whatever the status of the epidemic is.
However, with $s_{\min}$ being a potential dominant strategy, the information broadcaster can make the condition on $r$ for enabling dominant strategies less restrictive by taking advantage of $\Bar{\Theta}$. 
Specifically, by misreporting $\Bar{\Theta}$ with a value $\Tilde{\Theta}$ satisfying $0\leq\Tilde{\Theta}<\Bar{\Theta}$ in every information broadcast, the information broadcaster makes $s_{\min}$ dominant for all players even if $\abs{r} < \abs{r_{\mathrm{crit}}}$. 
This shows the destructive impact of misinformation. 
Indeed, even when a player possesses a reward perception that social inactivity during the epidemic is acceptable,
if the media consecutively underreport the epidemic, this player would underestimate possible consequences of infection and become highly socially active, \ie, playing strategy $s_{\min}$. 
As a consequence, few infections cause regional outbreaks of the epidemic, and the epidemic eventually becomes a pandemic.

\section{Time-dependent behavior}
\label{sec:time dependent}
In  Section \ref{sec:long term}, we have assumed that the epidemic dynamics evolve at a faster time scale. In this section, we investigate our framework at a different time scale. We first present a result analogous to that of Section \ref{sec:long term:long term property of the game } to make connections with the rate-$\tau$ exponential distribution that controls the times between the information broadcasts. 
Then, we show that the essence of the framework is maintained when we consider the approximate time-dependent epidemic dynamics.  
\par

\subsection{Time-dependent property of the game}
\label{sec:time dependent:short term property of the game}
We emphasize the time-dependence of the probability of infection using $\eta^d_i(\Delta t)$, where $\Delta t$ denotes the time between information broadcasts. We rewrite the payoffs of players as:
\begin{equation}
    F^d_i=s^d_ir-(1-s^d_i)\eta^d_i(\Delta t).
    \label{eq:payoff function time-dependent}
\end{equation}
The next result is the time-dependent counterpart of Theorem \ref{thm:submodular game long term}.
\begin{theorem}
\label{thm:time-dependent submodular game}
Under the assumption that all the players have the same probability to be infected at the beginning, $\ie$ $I^d_i(0)=i_0, \forall i\in\mathcal{I}^d, \forall d\in\mathcal{D}$, the game defined in Section \ref{sec:problem setting: population game} with payoff functions (\ref{eq:payoff function time-dependent}) is a submodular game for any time $\Delta t>0$ between information broadcasts.
\end{theorem}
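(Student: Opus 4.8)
The plan is to mirror the proof of Theorem \ref{thm:submodular game long term}: it suffices to show that each cross-derivative block $DF^d_c:=\partial F^d/\partial x^c$ satisfies $(\Sigma^d)^\top DF^d_c\,\Sigma^c\le 0$ for all $d,c\in\mathcal{D}$ and all social states, where $\Sigma^d$ is the consecutive-difference matrix introduced in that proof. Since $F^d_i=s^d_i r-(1-s^d_i)I^d_i(\Delta t)$ and only $I^d_i(\Delta t)$ carries the dependence on $x$, the whole task reduces to controlling the mixed consecutive differences of the finite-time sensitivities $\partial I^d_i(\Delta t)/\partial x^c_l$, rather than of the steady-state quantities.

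First I would set up the first-variation (sensitivity) equations for (\ref{eq:SI dynamics})--(\ref{eq:Theta}). Writing $\theta^d_i=\lambda d(1-s^d_i)$ and differentiating the coupled system in the direction $x^c_l$, the sensitivity $\psi^d_i(t):=\partial I^d_i(t)/\partial x^c_l$ obeys a scalar-forced linear equation $\dot\psi^d_i=-(\gamma+\theta^d_i\Theta(t))\psi^d_i+\theta^d_i(1-I^d_i(t))\,\sigma(t)$ with $\psi^d_i(0)=0$, where $\sigma(t):=\partial\Theta(t)/\partial x^c_l$ is a single scalar trajectory common to every $(d,i)$ --- the exact analogue of the fact that in the steady-state proof all coupling passes through the scalar $\Bar{\Theta}$. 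The crucial use of the hypothesis $I^d_i(0)=i_0$ is that it forces $I^d_i(t)=\mathcal{I}(t,\theta^d_i)$ to depend on $(d,i)$ only through $\theta^d_i$; hence the integrating factor $\Phi(\Delta t,\tau;\theta):=\exp(-\int_\tau^{\Delta t}(\gamma+\theta\Theta(u))\,du)$ and the forcing weight depend on the index only through $\theta$.

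Solving the linear equations and substituting back into the self-consistency relation for $\sigma$ yields a Volterra equation whose kernel is independent of $(c,l)$; its nonnegative resolvent lets me write $\sigma(\tau)=\tfrac{c}{\Bar{d}}\,G(\tau,\theta^c_l)$ with $G(\tau,\cdot)$ increasing in its argument (because $\mathcal{I}$ is increasing in $\theta$ and the resolvent is nonnegative). The payoff of this bookkeeping is that every entry becomes a bilinear time integral,
\begin{equation*}
\frac{\partial F^d_i}{\partial x^c_l}=-\frac{c}{\lambda d\,\Bar{d}}\int_0^{\Delta t}\tilde\rho(\tau,\theta^d_i)\,G(\tau,\theta^c_l)\,d\tau,\qquad \tilde\rho(\tau,\theta):=\theta^2\,\Phi(\Delta t,\tau;\theta)\,(1-\mathcal{I}(\tau,\theta)).
\end{equation*}
Because the row index enters only through $\theta^d_i$ and the column index only through $\theta^c_l$, the mixed consecutive difference factorizes inside the integral: the $(k,l)$ entry of $(\Sigma^d)^\top DF^d_c\,\Sigma^c$ equals $-\tfrac{c}{\lambda d\Bar{d}}\int_0^{\Delta t}\big(\tilde\rho(\tau,\theta^d_{k+1})-\tilde\rho(\tau,\theta^d_k)\big)\big(G(\tau,\theta^c_{l+1})-G(\tau,\theta^c_l)\big)d\tau$. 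Since strategies are ordered increasingly, $\theta^d_{k+1}<\theta^d_k$ and $\theta^c_{l+1}<\theta^c_l$; using that $G(\tau,\cdot)$ is increasing (so the second bracket is $\le0$ for every $\tau$), the desired inequality $\le 0$ follows provided the first bracket is also $\le0$ for every $\tau$, i.e.\ provided $\tilde\rho(\tau,\cdot)$ is nondecreasing in $\theta$ uniformly in $\tau$.

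The hard part is exactly this last monotonicity, and I expect it to be the main obstacle. Unlike the steady-state kernel $\theta^2/(\gamma+\theta\Bar{\Theta})^2$, whose $\theta$-derivative is visibly $2\gamma\theta/(\gamma+\theta\Bar{\Theta})^3>0$, the time-dependent kernel $\tilde\rho$ multiplies the increasing factor $\theta^2$ by the two $\theta$-decreasing factors $\Phi$ and $(1-\mathcal{I})$, so its monotonicity is not algebraic. I would establish it by differentiating under the flow: first show $q(t,\theta):=\partial_\theta\mathcal{I}(t,\theta)\ge0$ by a Gronwall/comparison argument on its nonnegatively-forced linear equation --- the same mechanism that makes $\mathcal{I}$ increasing in $\theta$ --- and then control $\partial_\theta\tilde\rho$ with these bounds. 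The delicate point is uniformity in $\tau$: for $\tau$ bounded away from $\Delta t$ the exponential decay of $\Phi(\Delta t,\tau;\theta)$ in $\theta$ competes against the growth of $\theta^2$, so a crude pointwise estimate does not suffice and one must either restrict to the range of $\theta^d_i$ on which the sign is clean or argue the sign of the full $\tau$-integral directly. Once this monotonicity is secured, the entrywise sign control closes the argument exactly as in Theorem \ref{thm:submodular game long term}, and the strategic-substitutes interpretation carries over.
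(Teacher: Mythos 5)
Your plan stalls exactly where you say it does, and that gap is fatal to the argument as proposed: the sign of each entry of $(\Sigma^d)^\top DF^d_c\,\Sigma^c$ is made to rest on the claim that $\tilde\rho(\tau,\theta)=\theta^2\,\Phi(\Delta t,\tau;\theta)\,(1-\mathcal{I}(\tau,\theta))$ is nondecreasing in $\theta$ uniformly in $\tau$, and you never establish it. The obstruction you identify is real: the integrating factor $\Phi$ decays exponentially in $\theta$ for $\tau$ bounded away from $\Delta t$, so the monotonicity is not an algebraic fact and may fail on part of the parameter range; and your fallback of restricting to a range of $\theta^d_i$ ``on which the sign is clean'' is not available, since the theorem is asserted for all strategy sets and all $\Delta t>0$. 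A proof whose key inequality is flagged as an open obstacle is not a proof.

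The difficulty is self-inflicted, because you differentiate the trajectories through the full feedback loop: your $\sigma(t)=\partial\Theta(t)/\partial x^c_l$ solves a Volterra equation precisely because $\Theta$ depends on the $I^{d'}_{i'}(t)$, which in turn depend on $x$. The paper's proof never opens that loop. It differentiates (\ref{eq:integral on [t,t+dt]}) through the explicit appearance of $x^c_j$ in $\Theta(t)$ only, i.e.\ $\partial\Theta(t)/\partial x^c_j = c\,I^c_j(t)/\bar{d}$ with the trajectories held fixed, which yields (\ref{eq:DF dicj}):
\begin{equation*}
\frac{\partial F^d_i}{\partial x^c_j}
=-\frac{\lambda\, dc}{\bar{d}}\int_t^{t+\Delta t}\bigl[(1-s^d_i)^2(1-I^d_i(u))\bigr]\,\bigl[I^c_j(u)\bigr]\,du .
\end{equation*}
No integrating factor ever appears. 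The column factor $I^c_j(u)$ is monotone in $j$ by the ordering lemma (equal initial conditions $I^d_i(0)=i_0$ force $I^c_{j+1}(u)<I^c_j(u)$ for all $u>0$), and the row factor $(1-s^d_i)^2(1-I^d_i(u))$ is shown to be decreasing in $i$ not by a direct estimate on the kernel but by rearranging the ODE itself, $(1-s^d_i)^2(1-I^d_i)=(1-s^d_i)\,(\dot I^d_i+\gamma I^d_i)/(\lambda d\,\Theta)$, each factor of which is monotone in $i$ by the same ordering lemma. The mixed consecutive difference is then a product of two nonpositive brackets pointwise in $u$, and the sign of the integral follows with no uniformity issue. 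So to repair your proposal you must either adopt the paper's direct-effect notion of the derivative, in which case your bilinear factorization collapses to the paper's and the ODE-rearrangement trick closes the argument, or genuinely prove your kernel monotonicity, which the paper never needs and which your own analysis suggests is doubtful.
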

\begin{proof}
We first show that $I^d_i(t)$ are ordered when $I^d_i(0)=i_0, \forall i\in\mathcal{I}^d, \forall d\in\mathcal{D}$.
From (\ref{eq:SI dynamics}), we observe that the distinct term is $\lambda^d_i=\lambda(1-s^d_i)$. 
For all $i,j\in\mathcal{I}^d$ such that $i>j$ and $s^d_i,s^d_j\in\mathcal{S}^d$, 
since $s^d_i>s^d_j$, $\dot{I}^d_i(t)$ is an upper bound of $\dot{I}^d_j(t)$ for all $d\in\mathcal{D}$.
At an arbitrary time $t>0$, we obtain the expression
\begin{equation*}
    I^d_i(t)=I^d_i(0)+\int_0^t -\gamma I^d_i(t)+\lambda (1-s^d_i)(1-I^d_i(t))d\Theta(t) dt.
\end{equation*}
Since $I^d_i(0)=i_0, \forall i\in\mathcal{I}^d, \forall d\in\mathcal{D}$, we conclude that $I^d_j(t)>I^d_i(t)$ for all $t>0$ and $d\in\mathcal{D}$ if $i>j$ using arguments in \cite{brauer1963bounds}. 
Note that the social state $x$ only appears in the expression of $\Theta(t)$ and $\Theta(t)$ stays the same for all $s^d_i\in\mathcal{S}^d$ and $d\in\mathcal{D}$. Therefore, the relation $I^d_j(t)>I^d_i(t)$ holds when the social state $x$ evolves.
\par
Next, we derive the structural properties of the matrix $DF^d_c(x)$ with the payoff (\ref{eq:payoff function time-dependent}). We focus on the time period $[t,t+\Delta t]$. The evolution of the epidemic dynamics is 
\begin{equation}
    I^d_i(t+\Delta t)=I^d_i(t)+\int_t^{t+\Delta t} -\gamma I^d_i(t)+\lambda (1-s^d_i)(1-I^d_i(t))d\Theta(t) dt.
    \label{eq:integral on [t,t+dt]}
\end{equation}
The only term that depends on the social state in (\ref{eq:integral on [t,t+dt]}) is $\Theta(t)$.
Hence, the derivative of the payoff (\ref{eq:payoff function time-dependent}) with respect to the social state is
\begin{equation}
    \frac{\partial F^d_i}{\partial x^c_j}
    =-\frac{\lambda}{\bar{d}}dc\int_t^{t+\Delta t}
    [(1-s^d_i)^2(1-I^d_i(t))][I^c_j(t)]dt. 
    \label{eq:DF dicj}
\end{equation}
The matrix $DF^d_c(x)$ takes (\ref{eq:DF dicj}) as the element on the row corresponding to strategy $s^d_i\in\mathcal{S}^d$ of population $d$ and the column corresponding to strategy $s^c_j\in\mathcal{S}^c$ of population $c$.
Our objective is to show that $(\Sigma^d)^T DF^d_c(x) \Sigma^c\leq 0$ for all $d,c\in\mathcal{D}$.
We arrive at the expression of the  element in $(\Sigma^d)^T DF^d_c(x) \Sigma^c$ on row $i$ column $j$ as follows:
\begin{equation}
    [(\Sigma^d)^T DF^d_c(x) \Sigma^c]^{d,c}_{i,j}
    =-\frac{\lambda dc}{\bar{d}}(\Upsilon^{d,c}_{i+1,j+1}-\Upsilon^{d,c}_{i,j+1}-\Upsilon^{d,c}_{i+1,j}+\Upsilon^{d,c}_{i,j}),
\end{equation}
where $\Upsilon^{d,c}_{i,j}=\int_t^{t+\Delta t}[(1-s^d_i)^2(1-I^d_i(t))][I^c_j(t)]dt$.
By combining terms, we obtain
\begin{equation}
\begin{aligned}
    & \Upsilon^{d,c}_{i+1,j+1}-\Upsilon^{d,c}_{i,j+1}-\Upsilon^{d,c}_{i+1,j}+\Upsilon^{d,c}_{i,j} \\
    = 
    &\int_t^{t+\Delta t}[I^c_{j+1}(t)-I^c_j(t)][(1-s^d_{i+1})^2(1-I^d_{i+1}(t))-(1-s^d_i)^2(1-I^d_i(t))]dt.
    \label{eq:Upsilon term}
\end{aligned}
\end{equation}
Then, it suffices to prove that (\ref{eq:Upsilon term}) is nonnegative. 
\par
A sufficient condition for (\ref{eq:Upsilon term}) to be nonnegative is that the integrand in (\ref{eq:Upsilon term}) is nonnegative on the interval $[t,t+\Delta t]$. We have shown that $I^d_j(t)>I^d_i(t)$ when $i>j$. 
Hence, it suffices to show that $[(1-s^d_{i+1})^2(1-I^d_{i+1}(t))-(1-s^d_i)^2(1-I^d_i(t))]$ is negative.
By rearranging the differential equation (\ref{eq:SI dynamics}), we arrive at
\begin{equation}
    \frac{\dot{I}^d_i(t)+\gamma I^d_i(t)}{\lambda d \Theta(t)}=(1-s^d_i)(1-I^d_i(t)).
    \label{eq:rearrage ODE}
\end{equation}
Multiplying $1-s^d_i$ on both sides of (\ref{eq:rearrage ODE}), we obtain
\begin{equation}
    \frac{\dot{I}^d_i(t)+\gamma I^d_i(t)}{\lambda d \Theta(t)}(1-s^d_i)=(1-s^d_i)^2(1-I^d_i(t)).
    \label{eq:rearrage ODE and multiply 1-s}
\end{equation}
By (\ref{eq:rearrage ODE and multiply 1-s}), it suffices to prove that the left-hand side of (\ref{eq:rearrage ODE and multiply 1-s}) is decreasing with respect to $i$. 
We have shown that $\dot{I}^d_j(t)$ upper bounds $\dot{I}^d_i(t)$ and $I^d_j(t)>I^d_i(t)$ if $i>j$. 
In addition, the strategies in the set $\mathcal{S}^d$ follow an increasing order.
Therefore, we conclude that the left-hand side of (\ref{eq:rearrage ODE and multiply 1-s}) decreases as $i$ increases.
This completes the proof.
\qed
\end{proof}

We remark that in the proof of Theorem \ref{thm:time-dependent submodular game}, the time $\Delta t$ is arbitrary. 
This suggests the possibility of different time intervals between two information broadcasts. 
Hence, we can assume that the times between the information broadcasts are independent and follow a rate $\tau$ exponential distribution, which coincides with the settings described in Section \ref{sec:problem setting}. 
\par
With Theorem \ref{thm:time-dependent submodular game}, the convergence result in Corollary \ref{thm:BRD convergence} can be extended to the setting where the epidemic evolves for an arbitrary time between any two information broadcasts. Indeed, after the learning procedure defined in (\ref{eq:BRD}) does not result in new social states given additional information broadcasts, the epidemic will gradually converge to the unique positive steady state associated with the current social state as $t\rightarrow \infty$.
This leads to Nash equilibria defined in Definition \ref{def:Nash equilibrium}.
\par

\subsection{Equivalent networks and approximations}
\label{sec:time dependent:equivalent small network}
In general, the time-dependent behaviors depend on the solution of a system of nonlinear differential equations. Here, we study it using approximations.
\par
Combining (\ref{eq:SI dynamics}) and (\ref{eq:Theta}), we obtain
\begin{equation}
    \begin{aligned}
        \frac{d}{dt}I^d_i(t)&=-\gamma I^d_i(t)+\lambda d(1-s^d_i)\left[ 1-I^d_i(t) \right] \frac{\sum_{d\in\mathcal{D}}\sum_{i\in\mathcal{I}^d}dx^d_iI^d_i(t)}{\sum_{d\in\mathcal{D}}dm^d}\\
        &=-\gamma I^d_i(t)+\lambda \left[ 1-I^d_i(t) \right]\frac{d(1-s^d_i)\sum_{d\in\mathcal{D}}\sum_{i\in\mathcal{I}^d}dx^d_iI^d_i(t)}{\Bar{d}}\\
        &=-\gamma I^d_i(t)+\lambda \left[ 1-I^d_i(t) \right]\sum_{d'\in\mathcal{D},i'\in\mathcal{I}^{d'}}\Tilde{A}_{(d,i),(d',i')}I^{d'}_{i'}(t),
    \end{aligned}
    \label{eq:before linearization}
\end{equation}
where $\Tilde{A}_{(d,i),(d',i')}=\frac{d(1-s^d_i)d'x^{d'}_{i'}}{\Bar{d}}$ denotes an entry in the matrix $\Tilde{A}\in\mathbb{R}^{n\times n}$, with $(d,i)$ representing the row index and $(d',i')$ representing the column index. We use a pair $(d,i)$ to represent an index  to emphasize that this index is associated with population $d\in\mathcal{D}$ and strategy $s^d_i\in\mathcal{S}^d$, $i\in\mathcal{I}^d$.
The matrix $\Tilde{A}$ acts as an equivalent adjacency matrix if we regard our epidemic dynamics as a dynamical system on a small network with $n$ nodes. 
The entry $\Tilde{A}_{(d,i),(d',i')}$ stands for the weight on the link from node $(d,i)$ to node $(d',i')$. In general, matrix $\Tilde{A}$ is asymmetric, indicating that the equivalent network is directed. Since we have interpreted the population game as a $D$-player game in Section \ref{sec:long term NE}, the interactions among the $D$ players can be captured by the small network. 
\par
To analyze the time-dependent behavior of the epidemics, we ignore the quadratic terms in the equivalent dynamics and arrive at:
\begin{equation*}
    \frac{d}{dt}I^d_i(t)\simeq-\gamma I^d_i(t)+\lambda \sum_{d'\in\mathcal{D},i'\in\mathcal{I}^{d'}}\Tilde{A}_{(d,i),(d',i')}I^{d'}_{i'}(t).
\end{equation*}
Combining the two terms and rewriting it in matrix form, we obtain
\begin{equation}
    \frac{d}{dt}I(t)\simeq\lambda AI(t),
    \label{eq:linearization by igoring quadratic terms}
\end{equation}
where $A=\Tilde{A}-\frac{\gamma}{\lambda}\Lambda \in\mathbb{R}^{n\times n} $ and $\Lambda$ represents the identity matrix. Let $v_k$ and $\kappa_k$ denote the $k$-th eigenvector and eigenvalue of matrix $A$.
Then, we can express $I(t)$ as a linear combination of the eigenvectors as
\begin{equation*}
    I(t)\simeq\sum_{k=1}^n \alpha_k(t)v_k,
\end{equation*}
where $\alpha_k(t)$ is a time-dependent parameter associated with $v_k$. To solve for $\alpha_k(t)$, we use the following equation:
\begin{equation*}
    \sum_{k=1}^n\frac{d}{dt}\alpha_k(t)v_k=\frac{d}{dt}I(t)=\lambda A I(t)=\lambda A \sum_{k=1}^n\alpha_k(t)v_k=\lambda\sum_{k=1}^n\kappa_k\alpha_k(t)v_k.
\end{equation*}
Hence,
\begin{equation}
    \frac{d}{dt}\alpha_k(t)=\lambda\kappa_k\alpha_k(t).
    \label{eq:ODE to solve alpha}
\end{equation}
The differential equation (\ref{eq:ODE to solve alpha}) has solutions $\alpha_k(t)=\alpha_k(0)e^{\lambda \kappa_k t}$. Therefore, we obtain the time-dependent behavior of the epidemic:
\begin{equation}
    I(t)\simeq\sum_{k=1}^n\alpha_k(0)e^{\lambda \kappa_k t}v_k.
    \label{eq:time dependent behavior of I}
\end{equation}
The exponential term determines the growth speed of $I(t)$. Hence, the largest eigenvalue $\kappa_1$ corresponds to the fastest evolution component. 
\par
In general, (\ref{eq:time dependent behavior of I}) can be approximated using $I(t)\simeq \alpha_1(0)e^{\lambda \kappa_1 t}v_1$ by assuming that the largest eigenvalue is significantly greater than the second largest eigenvalue. 
In our case, we can leverage the structural properties of matrix $\Tilde{A}$ to justify the approximation given by
\begin{equation}
    I(t)\simeq\alpha_1(0)e^{\lambda \kappa_1 t}v_1.
    \label{eq:largest eigenvalue approxiamtion}
\end{equation}
From the definition of $\Tilde{A}_{(d,i),(d',i')}$, we obtain
\begin{equation*}
    \Tilde{A}= \frac{1}{\Bar{d}}
    \begin{pmatrix}
    1\cdot (1-s^1_1) \\
    \vdots \\
    D(1-s^D_{n^D})
    \end{pmatrix}
    \cdot 
    \begin{pmatrix}
    1\cdot x^1_1 & \cdots & Dx^D_{n^D} 
    \end{pmatrix}.
\end{equation*}
This shows that matrix $\Tilde{A}$ has rank $1$. Furthermore, $\Tilde{A}$ has a nonnegative eigenvalue $\Tilde{\kappa}_1=\bar{d}^{-1}\sum_{d\in\mathcal{D}}\sum_{i\in\mathcal{I}^d}d^2(1-s^d_i)x^d_i$ of order $1$ associated with the eigenvector $\Tilde{v}_1=\begin{pmatrix}
1\cdot (1-s^1_1) &
    \cdots &
    D(1-s^D_{n^D})
\end{pmatrix}^T$, and an eigenvalue $0$ of order $n-1$ associated with the eigenvectors $\Tilde{v}_k\in Null(\Tilde{A})$ if $k\neq 1$.
Since $A=\Tilde{A}-\frac{\gamma}{\lambda}\Lambda$, we arrive at $\kappa_1=\Tilde{\kappa}_1-\frac{\gamma}{\lambda}$ associated with $v_1=\Tilde{v}_1$, and $\kappa_k=-\frac{\gamma}{\lambda}$ associated with $v_k=\Tilde{v}_k$ if $k\neq 1$. 
Consider the conditions $\frac{\lambda d(1-s^d_i)}{\gamma}\geq 1,\forall i\in\mathcal{I}^d, \forall d\in\mathcal{D}$ which we have assume in Theorem \ref{thm:epidemic GAS} to obtain the positive steady state. 
Then, we obtain $\Tilde{\kappa}_1\geq \frac{1}{\bar{d}}\frac{\gamma}{\lambda}\sum_{d\in\mathcal{D}}\sum_{i\in\mathcal{I}^d}dx^d_i=\frac{\gamma}{\lambda}$. 
Hence, $\kappa_1\geq 0$. This shows that the largest eigenvalue is nonnegative when $\frac{\lambda d (1-s^d_i)}{\gamma}\geq 1$. 
In addition, we know that $\kappa_k=-\frac{\gamma}{\lambda}<0$ for all $k\neq1$. 
Therefore, the initial values $\alpha_k(0)$ in (\ref{eq:time dependent behavior of I}) decays exponentially in time if $k\neq 1$. 
This justifies the approximation (\ref{eq:largest eigenvalue approxiamtion}). 
\par
Some of the properties of our framework introduced in the previous sections depends on the submodularity of the game. The approximation (\ref{eq:largest eigenvalue approxiamtion}) does not break this structural feature. Consider the derivative of the payoff under the approximation (\ref{eq:largest eigenvalue approxiamtion}):
\begin{equation}
    \frac{\partial}{\partial x^c_j}F^d_i(x,t)=-\alpha_1(0)\lambda t e^{\lambda \kappa_1 t}d(1-s^d_i)^2c^2(1-s^c_j).
    \label{eq:DF approximate}
\end{equation}
In (\ref{eq:DF approximate}), both $d(1-z)^2$ and $c^2(1-z)$ are decreasing functions in $z$ on $z\in[0,1]$. Hence, we can show that the inequality   $(\Sigma^d)^TDF^d_c(x)\Sigma^c\leq 0$ holds when $DF^d_c(x)$ is based on (\ref{eq:DF approximate}). Therefore, after the approximation, players' decisions remain strategic substitutes.

\par
\subsection{Carleman linearization }
\label{sec:time dependent:Carleman}
\par
In the above analysis based on approximation, we have ignored the influence of the quadratic term. We now introduce another technique, called Carleman linearization \cite{steeb1980non}. This approximation method considers an iterative procedure which linearizes a nonlinear system with increasing accuracy as the iterations proceed. We provide the explicit linearization construction procedure.
\par
We first reformulate (\ref{eq:before linearization}) as
\begin{equation}
\begin{aligned}
    \dot{I^d_i}&=\lambda\sum_{d',i'}A_{(d,i),(d',i')}I^{d'}_{i'}-\lambda\sum_{d',i'}\Tilde{A}_{(d,i),(d',i')}I^d_iI^{d'}_{i'} \\
    &=\lambda\sum_{d',i'}A_{(d,i),(d',i')}I^{d'}_{i'}-\lambda\sum_{d',i',d'',i''}B_{(d,i),(d',i'),(d'',i'')}I^{d'}_{i'}I^{d''}_{i''},
    \label{eq:reformulation before Carleman}
\end{aligned}
\end{equation}
where $B_{(d,i),(d',i'),(d'',i'')}=\Tilde{A}_{(d,i),(d',i')}$ if $(d,i)=(d'',i'')$ and $B_{(d,i),(d',i'),(d'',i'')}=0$ otherwise. Let $\otimes$ denote the Kronecker product of two matrices. Define $I^{dd'}_{ii'}:=I^d_i I^{d'}_{i'}$. Let
$
    I_{[2]}=\begin{pmatrix}
    I^{11}_{11} & \cdots I^{1D}_{1n^D} &\cdots I^{D1}_{n^D1} & \cdots & I^{DD}_{n^D n^D} 
    \end{pmatrix}^T
$
denote the concatenation of the quadratic terms of $I$. 
Define matrix $B\in\mathbb{R}^{n\times^2}$ by
\begin{equation*}
    B=\begin{pmatrix}
    B_{(1,1),(1,1),(1,1)} & B_{(1,1),(1,1),(1,2)} & \cdots
    & B_{(1,1),(D,n^D),(D,n^D)} \\
    \vdots & \vdots & \ddots & \vdots \\
    B_{(D,n^D),(1,1),(1,1)} & B_{(D,n^D),(1,1),(1,2)} & \cdots & B_{(D,n^D),(D,n^D),(D,n^D)}
    \end{pmatrix}.
\end{equation*}
We use $\Lambda_l$ to denote the identity matrix of size $l$.
Then, we express (\ref{eq:reformulation before Carleman}) as
\small
\begin{equation}
\begin{aligned}
    &
    \begin{pmatrix}
    \dot{I} \\
    \dot{I}_{[2]}
    \end{pmatrix}
    =\lambda 
    \begin{pmatrix}
    A & -B \\
    0 & A\otimes\Lambda_{n} + \Lambda_n \otimes A
    \end{pmatrix}
    \begin{pmatrix}
    I \\
    I_{[2]}
    \end{pmatrix}
    - \lambda \cdot \\
    &
    \begin{pmatrix}
    0 \\
    0 \\
    \sum\limits_{\substack{d',i',d'',i''}}B_{(1,1),(d',i'),(d'',i'')}I^1_1I^{d'}_{i'}I^{d''}_{i''}+\sum\limits_{\substack{d',i',d'',i''}}B_{(1,1),(d',i'),(d'',i'')}I^{d'}_{i'}I^{d''}_{i''}I^{1}_{1} \\
    \vdots \\
    \sum\limits_{\substack{d',i',d'',i''}}B_{(D,n^D),(d',i'),(d'',i'')}I^{D}_{n^D}I^{d'}_{i'}I^{d''}_{i''}+\sum\limits_{\substack{d',i',d'',i''}}B_{(D,n^D),(d',i'),(d'',i'')}I^{d'}_{i'}I^{d''}_{i''}I^{D}_{n^D} 
    \end{pmatrix}
    .
    \label{eq:2 order Carleman}
\end{aligned}
\end{equation}
\normalsize
As described in \cite{steeb1980non}, by iteratively defining $A_{[2]}=A\otimes\Lambda_{n} + \Lambda_n \otimes A$ and $B_{[2]}=B\otimes\Lambda_{n} + \Lambda_n \otimes B$ until
$A_{[C]}=A\otimes \Lambda_{n^{C-1}}+\Lambda_n \otimes A_{[C-1]}$ and $B_{[C]}=B\otimes \Lambda_{n^{C-1}}+\Lambda_n \otimes B_{[C-1]}$, we obtain the order-$C$ Carleman linearization of system (\ref{eq:reformulation before Carleman}) as follows:
\begin{equation}
    \begin{pmatrix}
    \dot{I} \\
    \dot{I}_{[2]} \\
    \vdots \\
    \dot{I}_{[C-1]} \\
    \dot{I}_{[C]}
    \end{pmatrix}
    \simeq
    \lambda
    \begin{pmatrix}
    A & -B & 0 & \cdots & 0 & 0 \\
    0 & A_{[2]} & -B_{[2]} & \cdots & 0 & 0 \\
    \vdots & \vdots & \vdots & \ddots & \vdots & \vdots \\
    0 & 0 & 0 & \cdots & A_{[C-1]} & -B_{[C-1]} \\
    0 & 0 & 0 & \cdots & 0 & A_{[C]}
    \end{pmatrix}
    \begin{pmatrix}
    I \\
    I_{[2]}\\
    \vdots \\
    I_{[C-1]} \\
    I_{[C]}
    \end{pmatrix}.
    \label{eq: C order Carleman linearization}
\end{equation}
Let $\rho(\omega)$ denote the Laplace transform of $I(t)$. 
Then, we can solve the system of linear differential equations in (\ref{eq: C order Carleman linearization}) by solving the linear system
\begin{equation}
    \lambda
    \begin{pmatrix}
    \Delta & -B & 0 & \cdots & 0 & 0 \\
    0 & \Delta_{[2]} & -B_{[2]} & \cdots & 0 & 0 \\
    \vdots & \vdots & \vdots & \ddots & \vdots & \vdots \\
    0 & 0 & 0 & \cdots & \Delta_{[C-1]} & -B_{[C-1]} \\
    0 & 0 & 0 & \cdots & 0 & \Delta_{[C]}
    \end{pmatrix}
    \begin{pmatrix}
    \rho \\
    \rho_{[2]}\\
    \vdots \\
    \rho_{[C-1]} \\
    \rho_{[C]}
    \end{pmatrix}
    =-
    \begin{pmatrix}
    I(0) \\
    I_{[2]}(0)\\
    \vdots \\
    I_{[C-1]}(0) \\
    I_{[C]}(0)
    \end{pmatrix},
    \label{eq:linear system to solve for laplace variables}
\end{equation}
where $\Delta=A-\omega \Lambda_n$, $\Delta_{[2]}=A_{[2]}-\omega \Lambda_{n^2}$ and so on. The inverse of the matrix on the left-hand side of (\ref{eq:linear system to solve for laplace variables}) has a closed form according to \cite{steeb1980non}. Hence, we can find $\rho, \rho_{[2]},...,\rho_{[C]}$ and take the inverse Laplace transform to find $I,I_{[2]},...,I_{[C]}$. The accuracy of the linearization depends on the value of $C$. For numerical computations, the approximation is stopped at a sufficiently large $C$ under which the Carleman linearization is sufficiently accurate.
\par
The Carleman approximation (\ref{eq: C order Carleman linearization}) is more accurate than the approximation (\ref{eq:linearization by igoring quadratic terms}) for estimating the time-dependent growth of the original nonlinear system (\ref{eq:SI dynamics}). 
The explicit forms of the matrices $A,...,A_{[C]}$ and $B,...,B_{[C-1]}$ in (\ref{eq: C order Carleman linearization}) make it convenient to implement the iterative Carleman linearization procedures. 
After determining the order $C$, the Carleman linearization neglect high-order terms with the subscripts higher than $[C]$, which introduce errors to the computation of $I$ through the extended states $I_{[2]},...,I_{[C]}$ of the system (\ref{eq: C order Carleman linearization}).
However, according to \cite{hashemian2019feedback}, we can compensate this drawback by periodically resetting extended states and recomputing them from the original state $I$. 
In these resettings, there is no need to recompute the matrices $A,...,A_{[C]}$ and $B,...,B_{[C-1]}$.

As pointed out in \cite{hashemian2019feedback,fang2016formulation}, the equations of the sensitivity analysis of the states leads to the same forms in both linear and nonlinear systems approximated by the Carleman linearization. 
This suggests the possibility that the gradients of the payoffs (\ref{eq:payoff function time-dependent}) under  (\ref{eq: C order Carleman linearization}) share the same analytical properties as (\ref{eq:linearization by igoring quadratic terms}).
Specifically, we observe that the matrix $A_{[C]}$ is constructed using a Kronecker sum \cite{horn2012matrix}, whose eigenvalues are the sums of the eigenvalues of its components. 
The components $A\otimes \Lambda_{n^{C-1}}$ and $\Lambda_n\otimes A_{[C-1]}$ are themselves Kronecker products, whose eigenvalues are products of eigenvalues of their factors. 
Hence, the eigenvalues of $A_{[C]}$ are the eigenvalues of $A$ multiplied by the scalar $C$.
Therefore, by applying similar techniques as discussed in Section \ref{sec:time dependent:equivalent small network} to (\ref{eq: C order Carleman linearization}), we obtain
\begin{equation}
    \begin{pmatrix}
    \dot{I} \\
    \dot{I}_{[2]} \\
    \vdots \\
    \dot{I}_{[C-1]} \\
    \dot{I}_{[C]}
    \end{pmatrix}
    \simeq
    \alpha_{1,[C]}(0)e^{\lambda \kappa_{1,[C]}t}v_{1,[C]},
    \label{eq:largest eigenvalue approxiamtion Carleman}
\end{equation}
where $\kappa_{1,[C]}$ and $v_{1,[C]}$ represent the largest eigenvalue and the corresponding eigenvector of the system matrix of (\ref{eq: C order Carleman linearization}), and $\alpha_{1,[C]}(0)$ is the initial value of the time-dependent parameter, analogous to (\ref{eq:ODE to solve alpha}), satisfying $\frac{d}{dt}\alpha_{1,[C]}(t)=\lambda \kappa_{1,[C]}\alpha_{1,[C]}(t)$.
The time-dependent behavior of $I$ is obtained from (\ref{eq:largest eigenvalue approxiamtion Carleman}) by selecting the related terms. 
Leveraging these terms, we observe, by following similar arguments as those for (\ref{eq:DF approximate}), that players' decisions remain strategic substitutes for time-dependent epidemic behaviors under the Carleman linearization in our game framework.  In addition, (\ref{eq: C order Carleman linearization}) provides a way to estimate herd behaviors involved in $A,...,A_{[C]}$ using empirical epidemic data $I,\cdots,I_{[C]}$ at different time instances. 

\section{Numerical experiments}
\label{sec:numerical experiments}

In this section, we present the results of numerical experiments. 
For presentation purposes, we set the number of populations $D$ to be $2$, with numbers of strategies $n^1$ and $n^2$ being $3$ and $2$, respectively. 

\subsection{Convergence results}
\label{sec:numerical experiments:convergence}

\begin{figure}[ht]
    \centering
    \subfigure[]{\includegraphics[width=0.465\textwidth]{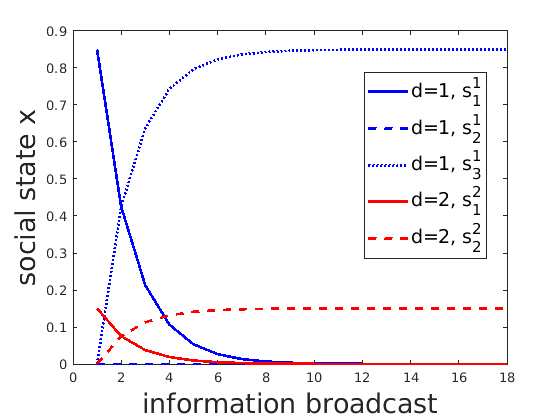}
    \label{fig:subfig1.1}
    }
    \subfigure[]{\includegraphics[width=0.487\textwidth]{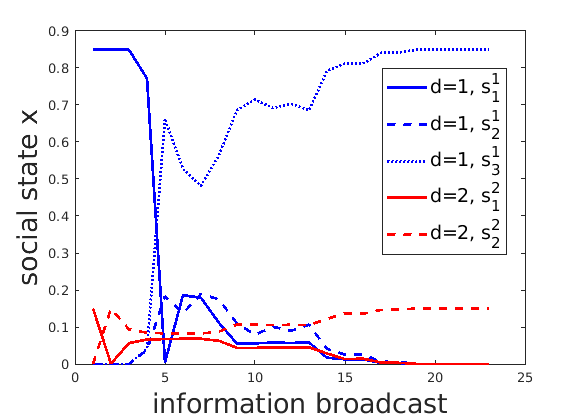}
    \label{fig:subfig1.2}
    }
    \caption[Optional caption for list of figures]{Convergence to the NE: (a) Using best-response dynamics (\ref{eq:BRD}). (b) Solving the optimization problem (\ref{eq:optimization prob}).} 
    \label{fig:fig1}
\end{figure}

\begin{figure}[ht]
    \centering
    \subfigure[]{\includegraphics[width=0.46\textwidth]{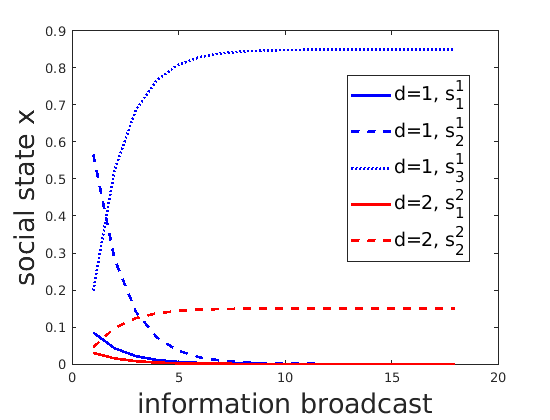}
        \label{fig:subfig2.1}
    }
    \subfigure[]{\includegraphics[width=0.472\textwidth]{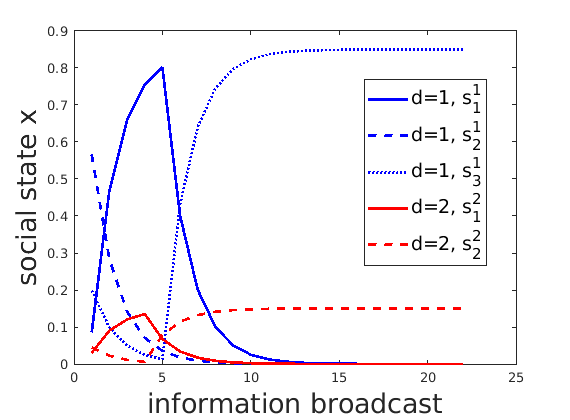}
        \label{fig:subfig2.2}
    }
    \caption[Optional caption for list of figures]{Study of the convergences of best-response dynamics under different scenarios: (a) Information broadcasts take place at the steady states of the epidemic dynamics. (b) Information broadcasts take place before the epidemic dynamics reach the steady states under the linearly approximated epidemic model.} 
    \label{fig:fig2}
\end{figure}

\par
In Fig.\ref{fig:fig1}, the best-response dynamics (\ref{eq:BRD}) and the iterations of the optimization problem (\ref{eq:optimization prob}) converge to the same NE.
According to the evolution of the social state, the best-response dynamics provide a smoother learning process where populations gradually learn the equilibrium strategies through sequential strategic interactions. 
This is due to the monotonicity of the updates of the social states in (\ref{eq:BRD}) under the submodular property of the game.
The learning process provided by the optimization problem (\ref{eq:optimization prob}) is not as well-shaped as the one given by the best-response dynamics.
However, the optimization problem provides a different learning approach containing a global objective for all players, differing from merely considering individual player's myopic reactions toward the payoff realizations.
We interpret the optimization of the problem (\ref{eq:optimization prob}) as an objective-guided evolutionary process describing strategy revisions in the population, with the explicit gradients (\ref{eq:gradient of F}) acting as the incentives of the players in revising their strategies.

In Fig.\ref{fig:fig2}, we compare the convergence of the best-response dynamics when information broadcasts take place at the steady states of the epidemic dynamics and when information broadcasts take place before the epidemic dynamics reach the steady states. Interestingly, in Fig.\ref{fig:subfig2.1} and Fig.\ref{fig:subfig2.2}, the evolution of the social state converges to the same NE, despite the different evolution patterns at the beginning. 
Apart from the different time scales used in Fig.\ref{fig:subfig2.1} and Fig.\ref{fig:subfig2.2}, we have approximated the epidemic evolution using only the linear terms of (\ref{eq:SI dynamics}) in Fig.\ref{fig:subfig2.2}. This result corroborates the approximation methods we have discussed in Section \ref{sec:time dependent:equivalent small network}.

\begin{figure}[ht]
    \centering
    \subfigure[]{\includegraphics[width=0.46\textwidth]{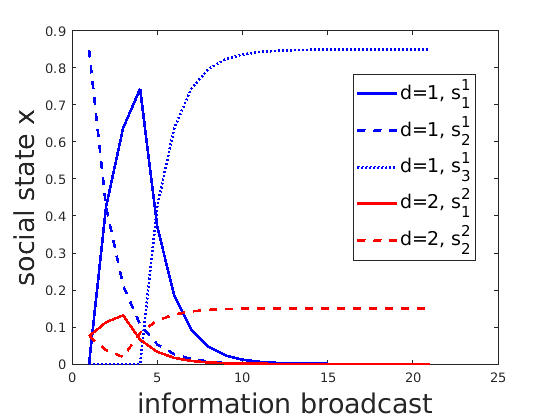}
        \label{fig:subfig3.1}
    }
    \subfigure[]{\includegraphics[width=0.455\textwidth]{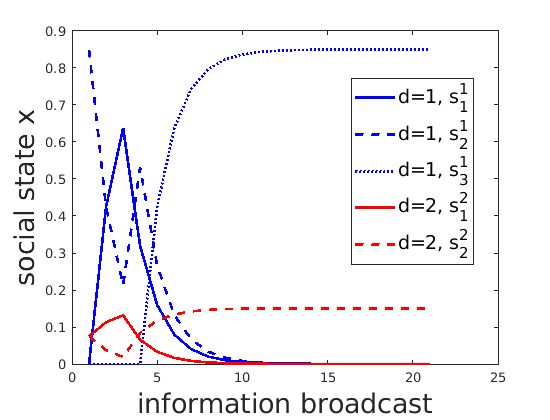}
        \label{fig:subfig3.2}
    }
    \caption[Optional caption for list of figures]{Effect of delaying the information broadcasts. (a) Information broadcasts take place before the epidemic dynamics reach the steady states. (b) Delaying the information broadcasts.} 
    \label{fig:fig3}
\end{figure}

In Fig. \ref{fig:fig3}, we compare the learning processes with and without delays in the information broadcasts. 
As shown in Fig. \ref{fig:fig3}, the delays in the information broadcasts significantly influence the evolution of the behaviors. 
The curve corresponding to $d=1$ and $s^1_2$ is smoother in Fig. \ref{fig:subfig3.1} than in Fig. \ref{fig:subfig3.2}. 
However, both the processes in Fig. \ref{fig:subfig3.1} and in Fig. \ref{fig:subfig3.2} converge to the same equilibrium point.
Delays in information broadcasts often arise from either the delays in data collection and processing or purposeful deferral of the broadcast. 
The results suggest that even if the broadcaster has difficulty in obtaining the epidemic status in real time, the out-of-date information can still lead the herd to reach equilibria, as long as the reported information is chronological and precise. 
Therefore, our framework has the potential to guide real-world policy developing, since neither information reporting nor behavioral revision of the herd is perfectly on time.



\subsection{Infection waves due to strategy changes}

\begin{figure}[ht]
    \centering
    \subfigure[]{\includegraphics[width=0.78\textwidth]{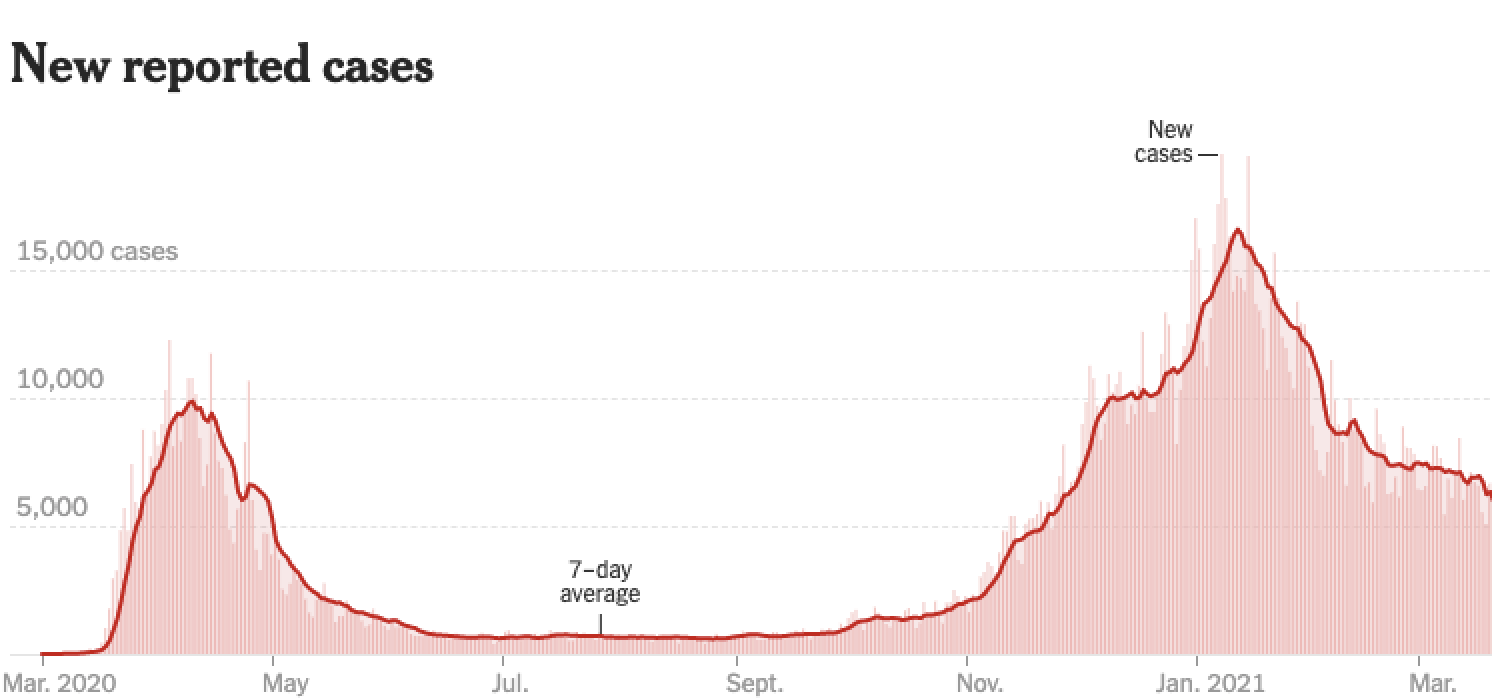}
        \label{fig:subfigreal1}
    }
    \subfigure[]{\includegraphics[width=0.97\textwidth]{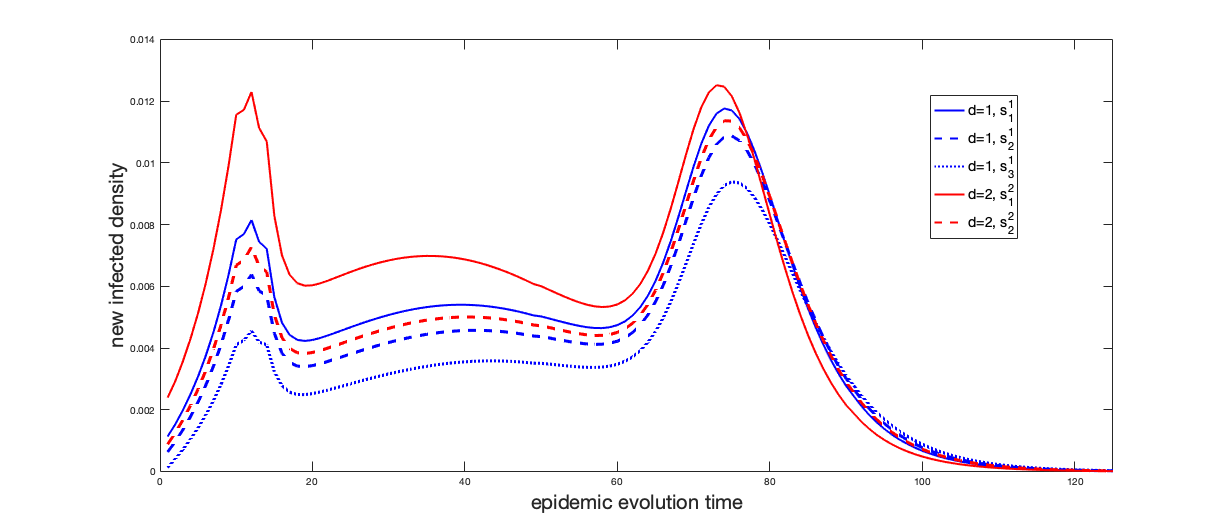}
        \label{fig:subfigreal2}
    }
    \caption[Optional caption for list of figures]{Multiple peaks of new infections. (a) Statistics of COVID-19 in New York \cite{NYTimeswebsite}. (b) Predicted new infected density curve using our framework.} 
    \label{fig:realdata}
\end{figure}
In Fig. \ref{fig:realdata}, we compare the curve of the reported cases using the real COVID-19 data and the simulated curve of the infected density using our framework. 
The multi-peak curves in Fig. \ref{fig:subfigreal1} and Fig. \ref{fig:subfigreal2} correspond to different waves of epidemic outbreak. 
The first wave is the natural outbreak of an epidemic when it first starts to spread among infectious individuals. 
The decreases of new cases between July 2020 and November 2020 in Fig. \ref{fig:subfigreal1} and between time 20 to 50 in Fig. \ref{fig:subfigreal2} correspond to the period when people start to avoid close contacts and the policies are enforced to mitigate the epidemic, such as wearing masks all the time.
In Fig. \ref{fig:subfigreal1}, the second infection wave is a consequence of relaxed social guidance \cite{piller2020undermining} and the violations of existing quarantine policies. 
The behaviors of the populations are set to change at $t=50$ in Fig. \ref{fig:subfigreal2}. This change captures the populations' overconfidence on the epidemic status as the number of new cases decreases.
When the populations become less careful, \ie, more people play strategies close to $s^d_{1}$ in the set $\mathcal{S}^d$, the new infected density curve increases and shows a second peak. 
This second peak is an indication of the influence of the herd behaviors on the epidemic evolution.  

\subsection{Equivalent networks}
\label{sec:numerical experiments:networks}

\begin{figure}[ht]
\centering
\subfigure[]{\includegraphics[width=0.45\textwidth]{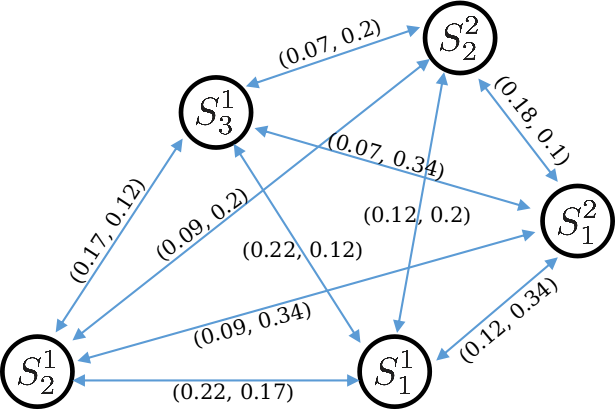}
    \label{fig:subfig4.1}
}
\subfigure[]{\includegraphics[width=0.458\textwidth]{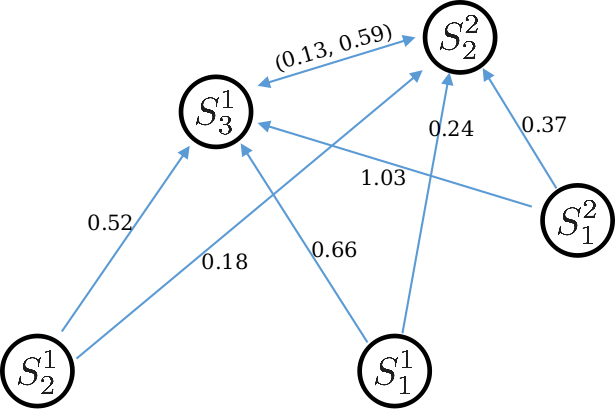}
    \label{fig:subfig4.2}
}
\caption[Optional caption for list of figures]{Equivalent small networks: The networks are directed graphs. Each link is associated with weights. (a) Fully connected initial network. (b) Core-peripheral equilibrium network. } 
\label{fig:fig4}
\end{figure}

Fig. \ref{fig:fig4} illustrates the equivalent small networks discussed in Section \ref{sec:time dependent:equivalent small network}. 
From the randomized initial network in Fig. \ref{fig:subfig4.1}, we eventually arrive at the equilibrium network in Fig. \ref{fig:subfig4.2}, which has a  special structure, where a few nodes are in the center with others being peripheral nodes. 
This pattern is often called a core-peripheral structure \cite{galeotti2010law}. 
In Fig. \ref{fig:subfig4.2}, the nodes $s^1_3$ and $s^2_2$ act as the cores, since all the other nodes have a link pointing towards them but they only have links pointing toward each other. 
Core-peripheral networks are consequences of network formation games where the directed links represent resource flows \cite{galeotti2010law}. 
This observation suggests that the evolution of populations' strategies can be statistically equivalent to an $N$-person  sequential network formation game. 
Each player represents a population in which the individuals are indistinguishable and choose the same strategy. 

\section{Conclusion}
\label{sec:conclusion}
In this paper, we have proposed an evolutionary game framework that couples the state dynamics of the epidemics with the evolution of strategies to study the herd behaviors of the population over complex networks.
We have designed the mechanism containing physical interactions and information broadcasts to combine coupled state transitions over a complex network and sequential strategy revisions in the populations. 
Taking the epidemic model as a special case, we have found a unique nontrivial steady state when the evolution of the epidemic evolves at a faster time scale.

We have characterized the Nash equilibrium of the game at the steady-state and have explicitly expressed its gradient that facilitates numerical computations and the analysis of structural properties of the game. 
In addition, we have shown that decisions in the game are strategic substitutes under arbitrary time scales. This observation has enabled simple learning processes to reach the equilibrium point. 
We have constructed an equivalent small network to represent the complex network, and  proposed structure-preserving approximation methods that can maintain the strategic substitutes property of the game.
Having applied our framework to study the impact of misinformation on epidemics, we have shown that misreports lead to a high social activity intensity, which can exacerbate the spreading of the infection disease.   

\par
Our numerical examples have indicated the predictive power of our framework by comparing the simulated dynamics to the real COVID-19 statistics.
The multi-peak pattern observed in the case study has shown that the herd behaviors have attributed to multiple outbreaks of an epidemic. 
In addition, the numerical experiments on the delays in the information broadcasts suggest an extension of our framework to scenarios where the populations react to the underlying dynamics with a delay.
We would extend our framework to study generalized compartmental models and the role of information structure in the human behaviors in an epidemic. 

%
%

\bibliographystyle{spmpsci}      

%
%


\bibliography{bibliography.bib}

\begin{thebibliography}{10}
\providecommand{\url}[1]{{#1}}
\providecommand{\urlprefix}{URL }
\expandafter\ifx\csname urlstyle\endcsname\relax
  \providecommand{\doi}[1]{DOI~\discretionary{}{}{}#1}\else
  \providecommand{\doi}{DOI~\discretionary{}{}{}\begingroup
  \urlstyle{rm}\Url}\fi

\bibitem{banerjee1992simple}
Banerjee, A.V.: A simple model of herd behavior.
\newblock The quarterly journal of economics \textbf{107}(3), 797--817 (1992)

\bibitem{bacsar1998dynamic}
Ba{\c{s}}ar, T., Olsder, G.J.: Dynamic noncooperative game theory.
\newblock SIAM (1998)

\bibitem{bauch2004vaccination}
Bauch, C.T., Earn, D.J.: Vaccination and the theory of games.
\newblock Proceedings of the National Academy of Sciences \textbf{101}(36),
  13391--13394 (2004)

\bibitem{brauer1963bounds}
Brauer, F.: Bounds for solutions of ordinary differential equations.
\newblock Proceedings of the American Mathematical Society \textbf{14}(1),
  36--43 (1963)

\bibitem{brauer2008compartmental}
Brauer, F.: Compartmental models in epidemiology.
\newblock In: Mathematical epidemiology, pp. 19--79. Springer (2008)

\bibitem{brunetti2018state}
Brunetti, I., Hayel, Y., Altman, E.: State-policy dynamics in evolutionary
  games.
\newblock Dynamic Games and Applications \textbf{8}(1), 93--116 (2018)

\bibitem{chen2020optimal}
Chen, J., Huang, Y., Zhang, R., Zhu, Q.: Optimal quarantining strategy for
  interdependent epidemics spreading over complex networks.
\newblock arXiv preprint arXiv:2011.14262  (2020)

\bibitem{como2020imitation}
Como, G., Fagnani, F., Zino, L.: Imitation dynamics in population games on
  community networks.
\newblock IEEE Transactions on Control of Network Systems  (2020)

\bibitem{dianetti2019submodular}
Dianetti, J., Ferrari, G., Fischer, M., Nendel, M.: Submodular mean field
  games: Existence and approximation of solutions.
\newblock arXiv preprint arXiv:1907.10968  (2019)

\bibitem{dorogovtsev2008critical}
Dorogovtsev, S.N., Goltsev, A.V., Mendes, J.F.: Critical phenomena in complex
  networks.
\newblock Reviews of Modern Physics \textbf{80}(4), 1275 (2008)

\bibitem{fang2016formulation}
Fang, Y., Armaou, A.: A formulation of advanced-step bilinear carleman
  approximation-based nonlinear model predictive control.
\newblock In: 2016 IEEE 55th Conference on Decision and Control (CDC), pp.
  4027--4032. IEEE (2016)

\bibitem{fu2011imitation}
Fu, F., Rosenbloom, D.I., Wang, L., Nowak, M.A.: Imitation dynamics of
  vaccination behaviour on social networks.
\newblock Proceedings of the Royal Society B: Biological Sciences
  \textbf{278}(1702), 42--49 (2011)

\bibitem{galeotti2010law}
Galeotti, A., Goyal, S.: The law of the few.
\newblock American Economic Review \textbf{100}(4), 1468--92 (2010)

\bibitem{gosak2021endogenous}
Gosak, M., Kraemer, M.U., Nax, H.H., Perc, M., Pradelski, B.S.: Endogenous
  social distancing and its underappreciated impact on the epidemic curve.
\newblock Scientific reports \textbf{11}(1), 1--10 (2021)

\bibitem{gubar2017optimal}
Gubar, E., Zhu, Q., Taynitskiy, V.: Optimal control of multi-strain epidemic
  processes in complex networks.
\newblock In: International Conference on Game Theory for Networks, pp.
  108--117. Springer (2017)

\bibitem{hashemian2019feedback}
Hashemian, N., Armaou, A.: Feedback control design using model predictive
  control formulation and carleman approximation method.
\newblock AIChE Journal \textbf{65}(9), e16666 (2019)

\bibitem{hayel2017epidemic}
Hayel, Y., Zhu, Q.: Epidemic protection over heterogeneous networks using
  evolutionary poisson games.
\newblock IEEE Transactions on Information Forensics and Security
  \textbf{12}(8), 1786--1800 (2017)

\bibitem{hofbauer1998evolutionary}
Hofbauer, J., Sigmund, K., et~al.: Evolutionary games and population dynamics.
\newblock Cambridge university press (1998)

\bibitem{horn2012matrix}
Horn, R.A., Johnson, C.R.: Matrix analysis.
\newblock Cambridge university press (2012)

\bibitem{JACKSON201595}
Jackson, M.O., Zenou, Y.: Chapter 3 - games on networks.
\newblock pp. 95--163. Elsevier (2015).
\newblock \doi{https://doi.org/10.1016/B978-0-444-53766-9.00003-3}.
\newblock
  \urlprefix\url{https://www.sciencedirect.com/science/article/pii/B9780444537669000033}

\bibitem{jiang2014graphical}
Jiang, C., Chen, Y., Liu, K.R.: Graphical evolutionary game for information
  diffusion over social networks.
\newblock IEEE Journal of Selected Topics in Signal Processing \textbf{8}(4),
  524--536 (2014)

\bibitem{lahkar2013evolution}
Lahkar, R.: Evolution in bayesian supermodular population games.
\newblock Studies in Microeconomics \textbf{1}(1), 59--80 (2013)

\bibitem{merigot2011multiscale}
M{\'e}rigot, Q.: A multiscale approach to optimal transport.
\newblock In: Computer Graphics Forum, vol.~30, pp. 1583--1592. Wiley Online
  Library (2011)

\bibitem{milgrom1990rationalizability}
Milgrom, P., Roberts, J.: Rationalizability, learning, and equilibrium in games
  with strategic complementarities.
\newblock Econometrica: Journal of the Econometric Society pp. 1255--1277
  (1990)

\bibitem{newman2018networks}
Newman, M.: Networks.
\newblock Oxford university press (2018)

\bibitem{pastor2015epidemic}
Pastor-Satorras, R., Castellano, C., Van~Mieghem, P., Vespignani, A.: Epidemic
  processes in complex networks.
\newblock Reviews of modern physics \textbf{87}(3), 925 (2015)

\bibitem{pastor2001epidemic}
Pastor-Satorras, R., Vespignani, A.: Epidemic spreading in scale-free networks.
\newblock Physical review letters \textbf{86}(14), 3200 (2001)

\bibitem{pastor2002immunization}
Pastor-Satorras, R., Vespignani, A.: Immunization of complex networks.
\newblock Physical review E \textbf{65}(3), 036104 (2002)

\bibitem{piller2020undermining}
Piller, C.: Undermining cdc (2020)

\bibitem{sandholm2010population}
Sandholm, W.H.: Population games and evolutionary dynamics.
\newblock MIT press (2010)

\bibitem{scharfstein1990herd}
Scharfstein, D.S., Stein, J.C.: Herd behavior and investment.
\newblock The American economic review pp. 465--479 (1990)

\bibitem{sorin2015finite}
Sorin, S., Wan, C.: Finite composite games: Equilibria and dynamics.
\newblock arXiv preprint arXiv:1503.07935  (2015)

\bibitem{steeb1980non}
Steeb, W.H., Wilhelm, F.: Non-linear autonomous systems of differential
  equations and carleman linearization procedure.
\newblock Journal of Mathematical Analysis and Applications \textbf{77}(2),
  601--611 (1980)

\bibitem{szabo2007evolutionary}
Szab{\'o}, G., Fath, G.: Evolutionary games on graphs.
\newblock Physics reports \textbf{446}(4-6), 97--216 (2007)

\bibitem{tembine2009evolutionary}
Tembine, H., Altman, E., El-Azouzi, R., Hayel, Y.: Evolutionary games in
  wireless networks.
\newblock IEEE Transactions on Systems, Man, and Cybernetics, Part B
  (Cybernetics) \textbf{40}(3), 634--646 (2009)

\bibitem{tembine2009mean}
Tembine, H., Le~Boudec, J.Y., El-Azouzi, R., Altman, E.: Mean field asymptotics
  of markov decision evolutionary games and teams.
\newblock In: 2009 International Conference on Game Theory for Networks, pp.
  140--150. IEEE (2009)

\bibitem{NYTimeswebsite}
{The New York Times}: Tracking coronavirus in new york: Latest map and case
  count.
\newblock
  \urlprefix\url{https://www.nytimes.com/interactive/2021/us/new-york-covid-cases.html}

\bibitem{topkis1979equilibrium}
Topkis, D.M.: Equilibrium points in nonzero-sum n-person submodular games.
\newblock Siam Journal on control and optimization \textbf{17}(6), 773--787
  (1979)

\end{thebibliography}
\nocite{*}

\end{document}